\documentclass[letterpaper,journal]{IEEEtran}
\usepackage{amsmath,amsfonts,amssymb}
\usepackage{algorithmic}
\usepackage{algorithm}
\usepackage{array}
\usepackage[caption=false,font=normalsize,labelfont=sf,textfont=sf]{subfig}
\usepackage{textcomp}
\usepackage{stfloats}
\usepackage{url}
\usepackage{verbatim}
\usepackage{graphicx}
\usepackage{pifont}
\usepackage{color}
\usepackage{upgreek}
\usepackage{cite}
\newtheorem{theorem}{Theorem}

\newtheorem{assumption}{Assumption}
\newtheorem{proposition}{Proposition}
\newtheorem{corollary}{Corollary}
\newtheorem{remark}{Remark}
\def\BibTeX{{\rm B\kern-.05em{\sc i\kern-.025em b}\kern-.08em
 T\kern-.1667em\lower.7ex\hbox{E}\kern-.125emX}}
\usepackage{balance}
\begin{document}
\title{When Do Pilots Matter for OFDM Sensing? Pilot-Data Resource Design for ISAC}

\author{\IEEEauthorblockN{
 Shengcai Zhou, \emph{Student~Member, IEEE},
 Luping Xiang, \emph{Senior Member, IEEE},
 Yi Wang, \emph{Member, IEEE},\\
 and Kun Yang, \emph{Fellow, IEEE}}
 \vspace{-0.5 cm}\\
 \thanks{S. Zhou, L. Xiang and K. Yang are with the State Key Laboratory of Novel Software Technology, Nanjing University, Nanjing 210008, China, and School of Intelligent Software and Engineering, Nanjing University (Suzhou Campus), Suzhou 215163, China, email: zhshc@smail.nju.edu.cn, luping.xiang@nju.edu.cn, kunyang@nju.edu.cn.}
 \thanks{Y. Wang is with the Yangtze Delta Region Institute (Quzhou), University of Electronic Science and Technology of China, Quzhou 324003, China (e-mail: wangyi@csj.uestc.edu.cn).}
}


\maketitle

\begin{abstract}
Practical OFDM-based integrated sensing and communication (ISAC) signals contain deterministic pilots and random data payloads, yet how these two types of resources jointly affect matched-filter sensing performance remains insufficiently understood. This paper establishes an analytical and optimization framework for pilot-data (P-D) OFDM sensing. We first derive a closed-form mean-square periodic autocorrelation function (MS P-ACF), which characterizes the respective contributions of the average subcarrier power spectrum and data-symbol randomness to sensing sidelobes. Based on this result, we reveal that pilot placement has no impact on normalized sidelobes under uniform power allocation or on the normalized global expected integrated sidelobe level (EISL), whereas it becomes an effective design degree of freedom for reshaping sidelobes under nonuniform power allocation. We further derive the globally optimal two-level pilot-data power allocation for global EISL minimization and characterize pilot placement for peak sidelobe suppression. For region-of-interest (ROI) sensing, we jointly optimize pilot placement and subcarrier power allocation to minimize the normalized ROI-EISL. A strictly uniform pilot placement is proved globally optimal under specified conditions, while a bisection-and-branch-and-bound algorithm is developed for general resource configurations. Numerical results validate the theoretical analysis and demonstrate that the proposed design reduces the ROI sidelobe level by up to 2.19 dB compared with optimized uniform-pilot OFDM, while improving ranging accuracy at an SNR of -10 dB by 84.3$\%$. These results reveal when and how pilot placement can be exploited to regulate the sensing behavior of random OFDM communication signals, providing practical resource-design principles for OFDM-based ISAC.
\end{abstract}

\begin{IEEEkeywords}
ISAC, OFDM, pilot-data signaling, periodic autocorrelation function, resource optimization
\end{IEEEkeywords}

\section{Introduction}
\subsection{Background}
\IEEEPARstart{A}{s} integrated sensing and communication (ISAC) research advances from concept validation to prototype development, sensing with communication signals is becoming an increasingly practical approach to ISAC implementation \cite{Wei2023ISACSignals,Zhou2026ExtendedTarget,Luo2026Bedrock}. Existing cellular networks not only provide wide-area coverage for sensing \cite{Xie2022Collaborative}, but also enable the reuse of communication signals while maintaining compatibility with existing standards; by contrast, redesigning ISAC systems entails challenges such as interference management and equipment modification \cite{Lu2025SensingRandom}. Unlike conventional deterministic sensing waveforms, the randomness of data carried by communication signals causes the actual transmitted waveform to vary with the transmitted content\cite{luo2026dbuofdm}. When random data payloads are directly exploited for sensing, the autocorrelation function and sensing metrics for detection and estimation become data-dependent, thereby degrading the reliability of target detection and parameter estimation. Although randomness is difficult to eliminate, communication systems can still regulate the sensing performance of random signals through physical layer design and receiver processing degrees of freedom, including constellation and modulation design, pulse shaping, power allocation, spatial precoding, and sensing receiver processing.

Recent studies have investigated communication-centric sensing from different perspectives. Zhang \emph{et al.} revealed the role of subcarrier power allocation in controlling ranging sidelobes \cite{Zhang2026OptimalPower}. Building on this, Meng \emph{et al.} jointly optimized constellation selection and subcarrier power allocation \cite{Meng2026ConstellationSelection}. Pilots also constitute an important resource for OFDM sensing. Bouziane and Arslan analyzed the ambiguity function (AF) grating lobes induced by uniform pilot placement \cite{Bouziane2026OptimizedPilot}. Furthermore, Xu \emph{et al.} exploited deterministic pilots and random data payloads for multiantenna target response estimation \cite{Xu2025PilotsDataPayloads}. These studies show that both random data and deterministic pilots can serve as sensing resources in practical OFDM frames. However, how they jointly affect matched-filter sensing performance and can be further optimized remains unclear.

\subsection{Related Work}
Systematic investigations of communication signal sensing began with the OFDM radar communication (RadCom) framework, which employed matched filtering and subcarrier-wise division filtering with two-dimensional Fourier processing for range and relative velocity estimation. Prototype experiments validated the feasibility of OFDM sensing while also revealing high sidelobes in its time domain autocorrelation \cite{sturm2009ofdm,sturm2011waveform}.

A rigorous theoretical characterization of sensing losses caused by communication randomness has long been lacking. Xiong \emph{et al.} revealed the tradeoff between deterministic and random signaling between communication and sensing by constructing a Cram\'er-Rao bound (CRB)-rate region: communication relies on input randomness to convey information, whereas sensing favors signal realizations with deterministic sample covariance \cite{Xiong2023FundamentalTradeoff}. This result provides a theoretical foundation for quantifying sensing losses induced by random signals.

Building on this, Xie \emph{et al.} introduced sensing mutual information (SMI) and established its upper bound for deterministic signals, although SMI is difficult to apply directly in practical sensing \cite{Xie2025SensingMutualInformation}. For monostatic multiple-input multiple-output (MIMO) sensing, the ergodic linear minimum mean-square error (ELMMSE) metric shows that precoding based on deterministic covariance underestimates target response estimation errors for short frames, but requires computationally expensive matrix inversion \cite{Lu2024RandomPrecoding}. For low-complexity matched filtering, the variance of the AF quantifies losses induced by randomness. The controllable component of this variance is determined by the constellation symbol power variance $\mathbb{E}[|x|^4]-1$. A larger fluctuation leads to higher average sidelobe power and poorer weak target detection performance \cite{Du2024Reshaping}. Furthermore, Liu \emph{et al.} employed the periodic autocorrelation function (P-ACF) and aperiodic autocorrelation function (A-ACF) to characterize the cyclic and linear matched filter responses of waveforms with and without a cyclic prefix (CP), respectively. Under quadrature amplitude modulation (QAM) and phase-shift keying (PSK), they proved that CP-OFDM is the unique globally optimal waveform that minimizes the average ranging sidelobes among all orthogonal communication waveforms with a CP \cite{Liu2025CPOFDM}. 

For waveforms with a CP, Liu \emph{et al.} proved that, under QAM/PSK constellations and Nyquist shaping, OFDM minimizes the expected P-ACF sidelobe level at every lag and can be further optimized through Nyquist pulse shaping \cite{Liu2025Iceberg}. Subcarrier power allocation is another important design degree of freedom: under full bandwidth transmission, uniform power allocation (UPA) globally minimizes the expected integrated sidelobe level (EISL) and the average P-ACF sidelobe at every nonzero lag \cite{Zhang2026OptimalPower}. In addition, prior information such as geographical information and historical scan results can constrain targets to local regions \cite{Wang2022LocalAmbiguity} and enable the suppression of data-dependent sidelobes over specified delay intervals using region of interest (ROI) mismatched filters \cite{Yang2026ConstellationIndependent}. However, existing studies mainly consider waveforms with purely random data and have neither characterized the P-ACF of practical pilot-data (P-D) OFDM signals nor optimized their sensing performance within the ROI.

On the other hand, sensing with deterministic signals in practical communication frames has also been extensively studied \cite{Li2025RethinkingISACSignaling}. Kumari \emph{et al.} reused the Golay complementary sequences in the IEEE 802.11ad frame preamble for radar probing, providing an early demonstration that deterministic training sequences in communication frames can directly serve as matched filter radar waveforms \cite{Kumari2018IEEE80211adRadar}. Subsequent studies have extended this approach to various communication reference signals, including the positioning reference signal (PRS), sounding reference signal (SRS), demodulation reference signal (DMRS), and channel state information reference signal (CSI-RS) \cite{Wei2022PRSBasedSensing,Wei2023DMRSBasedISAC}. However, deterministic signals usually occupy only a limited fraction of the available time-frequency resources, resulting in limited sensing performance due to insufficient effective bandwidth and sparse temporal observations.

Related pilot placement principles have also been studied in OFDM timing synchronization. The periodic and aperiodic autocorrelation sidelobe energies of pilot subcarrier waveforms under one-dimensional matched filtering were derived, establishing the fundamental relationship between deterministic pilot placement and the autocorrelation sidelobes of synchronization waveforms \cite{Ureten2009Autocorrelation}. Subsequently, Zhang \emph{et al.} analyzed the effects of different patterns of unified time and frequency reference signals, namely generalized pilots, on two-dimensional delay-Doppler ambiguity performance and resource overhead, and provided pattern design criteria for different sensing algorithms \cite{Zhang2024OFDMReferenceSignalPattern}. Li \emph{et al.} also proposed a similar framework \cite{Li2025SensingOrientedAdaptive}. Bouziane and Arslan further optimized the range grating lobes caused by uniform pilot placement, which had been identified previously, subject to communication constraints \cite{Bouziane2026OptimizedPilot}. Overall, sensing schemes relying solely on deterministic signals cannot overcome the sensing performance ceiling imposed by their limited occupancy.

Recent studies have jointly considered both signal types. From an information-theoretic perspective, Xie \emph{et al.} investigated sensing performance with deterministic pilots superposed on random data using SMI and developed an optimization approach for SMI maximization in bistatic multiantenna ISAC systems \cite{Xie2026SensingMutualInformation}. Xu \emph{et al.} further analyzed estimation performance using the ELMMSE metric and optimized the spatial precoding of data signals \cite{Xu2025PilotsDataPayloads}. Du \emph{et al.} considered unified P-D design for practical orthogonal time frequency space (OTFS) waveforms, adopted an EISL metric that accounts for P-D resources, and jointly designed deterministic pilot symbols and data power allocation \cite{Du2025UniformPilotDataOTFS}. Nevertheless, matched filter sensing performance for P-D OFDM signals has not been systematically analyzed or optimized.

\subsection{Contributions}
\begin{table*}[t]
\centering
\caption{Contrasting Our Contributions To The State of the Art}
\setlength{\tabcolsep}{4mm}{
\begin{tabular}{l|c|c|c|c|c|c}
\hline 
Contributions & \textbf{This work} & \cite{Zhang2026OptimalPower} & \cite{Bouziane2026OptimizedPilot}
 &\cite{Meng2026ConstellationSelection}
 & \cite{Du2025UniformPilotDataOTFS}
 & \cite{Yang2026ConstellationIndependent}
\\ \hline\hline
OFDM & \ding{52} & \ding{51} & \ding{51} & \ding{51} & & \ding{51} \\
\hline
Matched filter sensing & \ding{52} & \ding{51} & \ding{51} & \ding{51} & \ding{51} & \\
\hline
ROI & \ding{52} & & & & & \ding{51} \\
\hline
Pilot optimization & \ding{52} & & \ding{51} & & \ding{51} & \\
\hline
Subcarrier power allocation & \ding{52} & \ding{51} & & \ding{51} & & \\
\hline

\end{tabular}
}
\vspace{-0.3 cm}
\label{contributions}
\end{table*}
Motivated by the above studies, we establish a sensing performance analysis framework for practical P-D OFDM signals and propose a joint resource optimization strategy for minimizing the normalized form of the region of interest expected integrated sidelobe level (ROI-EISL). 
The novelty of this work is summarized in Table~\ref{contributions}, and the main contributions are as follows:

\begin{itemize}
\item We establish an analytical framework for the matched filter ranging performance of P-D OFDM signals. A closed-form expression for the mean-square P-ACF (MS P-ACF) is derived, which separates the deterministic contribution of the average P-D power spectrum from the contribution induced by data randomness, thereby enabling subsequent sidelobe analysis and optimization.

\item We characterize how the P-D resource structure governs normalized MS P-ACF sidelobes. We prove pilot placement invariance under uniform power allocation and derive the globally optimal two-level power allocation for normalized global EISL minimization. We further show that, under nonuniform power allocation, pilot placement reshapes the sidelobe distribution and can be exploited to reduce the peak sidelobe level ratio (PSLR).

\item We develop an ROI-aware P-D resource optimization framework for normalized ROI-EISL minimization. Under a uniform data subcarrier power lower bound and the condition $\xi<|\mathcal{S}_{\mathrm{p}}|$, we establish a globally optimal strictly uniform pilot placement. For general ROIs and power constraints, we formulate the joint design as a mixed-integer quadratic fractional program and develop a bisection-and-branch-and-bound (BBnB) algorithm for its solution.

\item Numerical results verify the derived effects of the P-D resource structure and the ROI-aware optimization results, and demonstrate the resulting gains in sidelobe suppression and ranging performance under different ROIs, pilot configurations, power constraints, and system parameters.
\end{itemize}

The remainder of this paper is organized as follows. Section~II presents the system model. Section~III analyzes the effects of the P-D resource structure on normalized MS P-ACF sidelobes. Section IV presents the ROI-aware P-D resource optimization problem and its solution. Section~V presents numerical results. Section~VI concludes the paper and discusses future research directions.

\textit{Notation:} Throughout this paper, lowercase and uppercase boldface
letters denote vectors and matrices, respectively.
$\mathbb{E}[\cdot]$ denotes statistical expectation,
$\lvert z\rvert$ denotes the absolute value or magnitude of a scalar $z$,
and $\lvert\mathcal{A}\rvert$ denotes the cardinality of a set $\mathcal{A}$.
Moreover, $(\cdot)^*$, $(\cdot)^{\mathrm{T}}$, and $(\cdot)^{\mathrm{H}}$ denote complex
conjugation, transpose, and conjugate transpose, respectively,
while $[\boldsymbol{A}]_{m,n}$ denotes the $(m,n)$th entry of matrix $\boldsymbol{A}$.
$\mathbf{1}$ and $\mathbf{0}$ denote the all-one and all-zero vectors of
appropriate dimensions, respectively.
For a Hermitian matrix $\boldsymbol{A}$, $\boldsymbol{A}\succeq\mathbf{0}$ denotes
that $\boldsymbol{A}$ is positive semidefinite.
Finally, $\mathrm{j}=\sqrt{-1}$ denotes the imaginary unit, and
$a\equiv b\pmod K$ denotes congruence modulo $K$.

\section{System Model}
We consider a single-antenna monostatic ISAC system that reuses the OFDM signal transmitted to a communication receiver for matched filter ranging within the ROI. The communication transmitter and sensing receiver (SRx) are colocated. The transmitted signal is received by the communication receiver and reflected by potential targets, with the resulting echoes collected and processed by the SRx.
\subsection{P-D OFDM Signal Model}
Consider an OFDM symbol with $K$ subcarriers. Full bandwidth transmission is assumed throughout, such that all subcarriers are activated and partitioned into a pilot set $\mathcal{S}_{\mathrm{p}}$ and a data set $\mathcal{S}_{\mathrm{d}}$, where $\mathcal{S}_{\mathrm{p}}\cup\mathcal{S}_{\mathrm{d}}=\mathcal{S}=\{0,1,\dots,K-1\}$ and $\mathcal{S}_{\mathrm{p}}\cap\mathcal{S}_{\mathrm{d}}=\varnothing$. Unless otherwise specified, the cardinalities of $\mathcal{S}_{\mathrm{p}}$ and $\mathcal{S}_{\mathrm{d}}$ are determined by a given resource configuration. The general form of the modulated symbol on the $k$th subcarrier is given by
\begin{equation}
X[k] =
\begin{cases}
\sqrt{p_{\mathrm{p},k}}a_k, & k\in\mathcal{S}_{\mathrm{p}},\\
\sqrt{p_{\mathrm{d},k}}s_k, & k\in\mathcal{S}_{\mathrm{d}},
\end{cases},
\end{equation}
where $p_{\mathrm{p},k}$ and $p_{\mathrm{d},k}$ denote the powers allocated to the $k$th pilot and data subcarriers, respectively; $a_k$ denotes the $k$th pilot symbol and satisfies $|a_k|^2=1$; and $s_k$ denotes the $k$th data symbol. The total transmit power is $P=\sum_{k\in\mathcal{S}_{\mathrm{p}}}p_{\mathrm{p},k}+\sum_{k\in\mathcal{S}_{\mathrm{d}}}p_{\mathrm{d},k}$.

\begin{assumption}[Unit Power and Rotational Symmetry]\label{assume1}
The data symbols are mutually independent and satisfy the unit power, zero mean, and zero pseudovariance conditions
\begin{equation}
\mathbb{E}\!\left[|s_k|^2\right] = 1,~
\mathbb{E}[s_k] = 0,~
\mathbb{E}\!\left[s_k^2\right] = 0,~
\forall k\in\mathcal{S}_{\mathrm{d}}.
\end{equation}
\end{assumption}

Assumption~\ref{assume1} holds for most practical constellations, including the PSK and QAM families, except for binary phase-shift keying and 8-QAM. Next, we characterize random fluctuations in symbol power using the fourth-order moment $\mu_4\triangleq \mathbb{E}[|s_k|^4]$ of the data symbols. For constant-modulus constellations, $\mu_4=1$, and for the considered nonconstant-modulus QAM constellations, $1<\mu_4<2$.

\subsection{Communication Receiver Model}
Assuming that the CP of the transmitted signal is no shorter than the maximum channel delay spread, after CP removal and $K$-point discrete Fourier transform (DFT) processing, the received communication signal on the $k$th subcarrier is expressed as
\begin{equation}
Y_{\mathrm{c}}[k]
=
H_{\mathrm{c}}[k]X[k]+N_{\mathrm{c}}[k],
\end{equation}
where $H_{\mathrm{c}}[k]$ denotes the frequency domain communication channel coefficient on the $k$th subcarrier, and $N_{\mathrm{c}}[k]\sim\mathcal{CN}(0,\sigma_{\mathrm{c}}^2)$ denotes frequency domain additive white Gaussian noise with variance $\sigma_{\mathrm{c}}^2$. Communication information is carried only on data subcarriers, and the signal-to-noise ratio (SNR) on the $k$th data subcarrier is $\gamma_k=
|H_{\mathrm{c}}[k]|^2p_{\mathrm{d},k}/\sigma_{\mathrm{c}}^2$. Following \cite{Zhou2014WirelessInformation}, we use the following communication rate metric, obtained by summing the subcarrier rates over all data subcarriers:
\begin{equation}
R_{\mathrm{c}}
=
\sum_{k\in\mathcal{S}_{\mathrm{d}}}
\log_2\left(
1+
\frac{|H_{\mathrm{c}}[k]|^2p_{\mathrm{d},k}}
{\sigma_{\mathrm{c}}^2}
\right).
\end{equation}
\subsection{Sensing Receiver Model}
For simplicity, we consider a stationary single target with normalized round-trip delay index $b_0$. After the same CP removal and $K$-point DFT processing, the received sensing signal on the $k$th subcarrier is
\begin{equation}
Y_{\mathrm{s}}[k]
=
h_{\mathrm{s}}X[k]
\mathrm{e}^{-\mathrm{j}2\uppi k b_0/K}
+
N_{\mathrm{s}}[k],
\end{equation}
where $h_{\mathrm{s}}$ denotes the complex target reflection coefficient, and $N_{\mathrm{s}}[k]\sim\mathcal{CN}(0,\sigma_{\mathrm{s}}^{2})$ denotes frequency domain additive white Gaussian noise with variance $\sigma_{\mathrm{s}}^2$ at the SRx. Applying matched filtering to $Y_{\mathrm{s}}[k]$, followed by an inverse discrete Fourier transform (IDFT), yields the time domain output at candidate delay index $b$,
\begin{equation}
g_{\mathrm{s}}[b]
=
\sum_{k=0}^{K-1}
X^{*}[k]Y_{\mathrm{s}}[k]
\mathrm{e}^{\mathrm{j}2\uppi kb/K}.
\end{equation}
The integer delay estimate is then given by
\begin{equation}
\widehat{b}_0
=
\arg\max_{b\in\mathcal B}
|g_{\mathrm{s}}[b]|^2,
\end{equation}
where $\mathcal B \subseteq
\mathcal B_{\mathrm{all}}
=
\left\{0,1,\ldots,\frac K2\right\}$ denotes the prescribed delay index search interval.
\subsection{Sensing Performance Metric}
The P-ACF of the transmitted symbols is defined as
\begin{equation}
r[\ell]=\sum_{k=0}^{K-1}|X[k]|^2\mathrm{e}^{\mathrm{j}2\uppi k\ell/K},~\forall \ell\in\mathcal L_K,
\end{equation}
where $\mathcal{L}_K\triangleq\{-K/2+1,\ldots,-1,0,1,\ldots,K/2\}$.
Since $g_{\mathrm{s}}[b]$ can be regarded as a delay-shifted P-ACF,
\begin{equation}
g_{\mathrm{s}}[b]=h_{\mathrm{s}}r[b-b_0]+\widetilde{N}_{\mathrm{s}}[b],
\end{equation}
where $b$ is the candidate delay index and $\widetilde{N}_{\mathrm{s}}[b]$ is the noise term after matched filtering. Therefore, the P-ACF shape directly affects ranging performance. High sidelobes may generate spurious peaks, which can mask weak targets in the presence of strong targets and increase the false alarm probability. Hence, lower sidelobes generally lead to better ranging performance. However, the P-ACF is itself a random function, and its impact on ranging performance must be characterized through statistical properties.
\subsubsection{Mean-Square P-ACF}
We first consider the MS P-ACF. Substituting the P-D OFDM signal model into the P-ACF expression yields
\begin{equation}\label{p-acf}
r[\ell]
=
\sum_{k\in\mathcal{S}_{\mathrm{p}}}
p_{\mathrm{p},k}\mathrm{e}^{\mathrm{j}2\uppi k\ell/K}
+
\sum_{k\in\mathcal{S}_{\mathrm{d}}}
p_{\mathrm{d},k}|s_k|^2\mathrm{e}^{\mathrm{j}2\uppi k\ell/K}.
\end{equation}
The average power of the $k$th subcarrier is defined as
\begin{equation}
w[k]=
\begin{cases}
p_{\mathrm{p},k}, & k\in\mathcal{S}_{\mathrm{p}},\\
p_{\mathrm{d},k}, & k\in\mathcal{S}_{\mathrm{d}}.
\end{cases}
\end{equation}

\begin{proposition}\label{prop1}
The MS P-ACF is given by
\begin{equation}
\mathbb{E}\!\left[|r[\ell]|^2\right] =
\left|\sum_{k=0}^{K-1}w[k]\mathrm{e}^{\mathrm{j}2\uppi k\ell/K}\right|^2 +
(\mu_4-1)\sum_{k\in\mathcal{S}_{\mathrm{d}}}p_{\mathrm{d},k}^2.
\end{equation}
\end{proposition}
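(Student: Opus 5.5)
The plan is a mean-plus-fluctuation (bias--variance) decomposition of the random P-ACF. Write $|s_k|^2 = 1 + \epsilon_k$ for $k\in\mathcal{S}_{\mathrm{d}}$. By Assumption~\ref{assume1}, $\mathbb{E}[\epsilon_k]=0$ and $\mathbb{E}[\epsilon_k^2]=\mathbb{E}[|s_k|^4]-2\mathbb{E}[|s_k|^2]+1=\mu_4-1$. Because the $s_k$ are mutually independent, $\mathbb{E}[\epsilon_k\epsilon_{k'}]=0$ for $k\neq k'$. Substituting into \eqref{p-acf} splits $r[\ell]$ as $r[\ell]=\bar r[\ell]+\tilde r[\ell]$, where
\begin{equation*}
\bar r[\ell]=\sum_{k=0}^{K-1}w[k]\mathrm{e}^{\mathrm{j}2\uppi k\ell/K},\qquad
\tilde r[\ell]=\sum_{k\in\mathcal{S}_{\mathrm{d}}}p_{\mathrm{d},k}\epsilon_k\mathrm{e}^{\mathrm{j}2\uppi k\ell/K}.
\end{equation*}
The pilot terms are deterministic because $|a_k|^2=1$, and they merge with the data means into $\bar r[\ell]$ through the definition of $w[k]$.

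Next, expand $\mathbb{E}[|r[\ell]|^2]=|\bar r[\ell]|^2+2\,\mathrm{Re}\{\bar r^*[\ell]\mathbb{E}[\tilde r[\ell]]\}+\mathbb{E}[|\tilde r[\ell]|^2]$. The cross term vanishes since $\mathbb{E}[\epsilon_k]=0$. For the last term, write it as the double sum
\begin{equation*}
\mathbb{E}\big[|\tilde r[\ell]|^2\big]=\sum_{k,k'\in\mathcal{S}_{\mathrm{d}}}p_{\mathrm{d},k}p_{\mathrm{d},k'}\mathbb{E}[\epsilon_k\epsilon_{k'}]\,\mathrm{e}^{\mathrm{j}2\uppi(k-k')\ell/K}.
\end{equation*}
Only the diagonal terms $k=k'$ survive, and there the phase factor equals one. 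This gives $(\mu_4-1)\sum_{k\in\mathcal{S}_{\mathrm{d}}}p_{\mathrm{d},k}^2$, independent of $\ell$, which is exactly the claimed expression.

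There is no substantive obstacle. The only point needing care is that the off-diagonal covariances vanish, which uses independence of the $|s_k|^2$ across subcarriers. It is also worth noting that only second moments of $|s_k|^2$ enter, so the zero-mean and zero-pseudovariance parts of Assumption~\ref{assume1} are not actually needed here. Only the unit-power condition and independence are used.
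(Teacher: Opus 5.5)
Your proof is correct and follows essentially the same route as the paper's Appendix~A: you split $r[\ell]$ into the deterministic mean built from $w[k]$ and the zero-mean fluctuation driven by $|s_k|^2-1$, drop the cross term, and keep only the diagonal of the double sum to obtain $(\mu_4-1)\sum_{k\in\mathcal{S}_{\mathrm{d}}}p_{\mathrm{d},k}^2$. Your side remark is also accurate: the argument uses only unit power, independence, and a common fourth moment $\mu_4$ across data subcarriers, not the zero-mean or zero-pseudovariance conditions.
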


\begin{IEEEproof}
See Appendix~\ref{appea}.
\end{IEEEproof}

This expression shows that the MS P-ACF consists of a deterministic term, given by the squared magnitude of the $K$-point DFT of the average subcarrier power spectrum, and a term induced by data randomness. The deterministic term contains the contributions of the pilot powers and average data subcarrier powers together with their coherent cross term; hence, the two contributions are inseparably coupled. Therefore, pilot placement and subcarrier power allocation jointly determine the MS P-ACF and provide design degrees of freedom for sidelobe suppression.

The MS P-ACF mainlobe is
$\mathbb{E}\!\left[|r[0]|^2\right] =
P^2 + (\mu_4-1)\sum_{k\in\mathcal{S}_{\mathrm{d}}}p_{\mathrm{d},k}^2$.
Since the mainlobe varies with power allocation, the normalized MS P-ACF is defined as
\begin{equation}
R_{\mathrm{MS}}[\ell] =
\frac{\mathbb{E}\!\left[|r[\ell]|^2\right]}
{\mathbb{E}\!\left[|r[0]|^2\right]}.
\end{equation}
\subsubsection{ROI-EISL}
In many practical sensing tasks, target delays are constrained by the operational scenario and prior information. We therefore focus on the lag index subset $\Omega_\xi\triangleq\{\ell\in\mathcal{L}_K:1\leq|\ell|\leq\xi\}$. To quantify the MS P-ACF sidelobe level over this region, we define the ROI-EISL as
\begin{equation}
\mathrm{EISL}(\Omega_\xi)
\!=\!
\sum_{\ell\in\Omega_\xi}
\left|\sum_{k=0}^{K-1}w[k]\mathrm{e}^{\mathrm{j}2\uppi k\ell/K}\right|^2
\!+
|\Omega_\xi|(\mu_4-1)
\sum_{k\in\mathcal{S}_{\mathrm{d}}}p_{\mathrm{d},k}^2.
\end{equation}

\begin{corollary}\label{coro1}
When the ROI covers all nonzero lag indices, the global EISL is
\begin{equation}\label{geisl}
\mathrm{EISL}(\Omega_{\mathrm{g}})
=
K\sum_{k\in\mathcal{S}_{\mathrm{p}}}p_{\mathrm{p},k}^2
+
\left[(K-1)\mu_4+1\right]
\sum_{k\in\mathcal{S}_{\mathrm{d}}}p_{\mathrm{d},k}^2
-
P^2,
\end{equation}
where $\Omega_{\mathrm{g}}=\left\{
-\frac K2+1,\ldots,-1,1,\ldots,\frac K2
\right\}$.
\end{corollary}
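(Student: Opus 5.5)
We start from the ROI-EISL definition with $\Omega_\xi$ replaced by $\Omega_{\mathrm{g}}$, so that $|\Omega_{\mathrm{g}}|=K-1$. The computation splits into the deterministic term and the randomness term of Proposition~\ref{prop1}, and only the deterministic term needs real work.

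For the deterministic term, we sum over all $\ell\in\mathcal{L}_K$ and then subtract the $\ell=0$ contribution. Write $W[\ell]=\sum_{k=0}^{K-1}w[k]\mathrm{e}^{\mathrm{j}2\uppi k\ell/K}$. The set $\mathcal{L}_K$ is a complete residue system modulo $K$, and $W[\ell]$ is $K$-periodic in $\ell$. Expanding $|W[\ell]|^2$ as a double sum over $k,k'$ and exchanging the order of summation, we use the orthogonality relation
\begin{equation*}
\sum_{\ell\in\mathcal{L}_K}\mathrm{e}^{\mathrm{j}2\uppi (k-k')\ell/K}=K\,\delta[k-k'],
\end{equation*}
which holds for $k,k'\in\{0,\dots,K-1\}$. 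This is Parseval's identity for the $K$-point DFT and gives
\begin{equation*}
\sum_{\ell\in\mathcal{L}_K}|W[\ell]|^2=K\sum_{k=0}^{K-1}w[k]^2 .
\end{equation*}
The $\ell=0$ term is $|W[0]|^2=\big(\sum_k w[k]\big)^2=P^2$, by the definition of the total power $P$. Therefore
\begin{equation*}
\sum_{\ell\in\Omega_{\mathrm{g}}}|W[\ell]|^2=K\sum_{k\in\mathcal{S}_{\mathrm{p}}}p_{\mathrm{p},k}^2+K\sum_{k\in\mathcal{S}_{\mathrm{d}}}p_{\mathrm{d},k}^2-P^2,
\end{equation*}
where we split $w[k]^2$ according to the partition $\mathcal{S}=\mathcal{S}_{\mathrm{p}}\cup\mathcal{S}_{\mathrm{d}}$.

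The randomness term contributes $(K-1)(\mu_4-1)\sum_{k\in\mathcal{S}_{\mathrm{d}}}p_{\mathrm{d},k}^2$. Adding it to the data part of the deterministic sum gives the coefficient
\begin{equation*}
K+(K-1)(\mu_4-1)=(K-1)\mu_4+1
\end{equation*}
on $\sum_{k\in\mathcal{S}_{\mathrm{d}}}p_{\mathrm{d},k}^2$. This yields \eqref{geisl}.

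The only step requiring care is the Parseval step. We must check that summing over the symmetric set $\mathcal{L}_K$, rather than over $\{0,\dots,K-1\}$, does not change the orthogonality relation. It does not, because each exponential is $K$-periodic in $\ell$ and $\mathcal{L}_K$ contains exactly one representative of each residue class modulo $K$. The remaining steps are bookkeeping.
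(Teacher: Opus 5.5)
Your proposal is correct and follows essentially the same route as the paper's proof: Parseval over $\mathcal{L}_K$ gives $K\sum_k w^2[k]$, you subtract the $\ell=0$ term $P^2$, add $(K-1)(\mu_4-1)\sum_{k\in\mathcal{S}_{\mathrm{d}}}p_{\mathrm{d},k}^2$, and combine the data coefficients into $(K-1)\mu_4+1$. You also justify why summing over the symmetric lag set $\mathcal{L}_K$ is a valid Parseval sum, which the paper leaves implicit.
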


\begin{IEEEproof}
See Appendix~\ref{appeb}.
\end{IEEEproof}

Similarly, the normalized ROI-EISL is defined as
\begin{equation}
\mathrm{EISL}^{\mathrm{N}}(\Omega_\xi)
=
\frac{\mathrm{EISL}(\Omega_\xi)}
{\mathbb{E}\!\left[|r[0]|^2\right]}.
\end{equation}
\subsubsection{Peak Sidelobe Level Ratio}
While EISL captures the overall sidelobe level, it does not characterize the maximum MS P-ACF sidelobe. We therefore introduce the PSLR as a complementary performance metric, defined as
\begin{equation}
\mathrm{PSLR}=\frac{\displaystyle\max_{\ell\in\Omega_{\mathrm{g}}}\mathbb{E}\!\left[|r[\ell]|^2\right]}{\mathbb{E}\!\left[|r[0]|^2\right]}.
\end{equation}
\section{Effects of P-D Resource Structure on Normalized MS P-ACF Sidelobes}
The impact of subcarrier power allocation on normalized MS P-ACF sidelobes has been analyzed in \cite{Zhang2026OptimalPower}. In this section, we examine how pilot placement and subcarrier power allocation affect these sidelobes.
\subsection{Pilot Placement Invariance under Uniform Power Allocation}
Because UPA is the globally optimal power allocation for the data-only baseline in \cite{Zhang2026OptimalPower}, we first examine the equal power case $p_{\mathrm{p},k}=p_{\mathrm{d},k}=p_0$ for P-D OFDM.

\begin{proposition}\label{prop2}
Under uniform subcarrier power allocation, pilot placement does not affect normalized MS P-ACF sidelobes.
\end{proposition}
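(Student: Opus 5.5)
Under UPA with $p_{\mathrm{p},k}=p_{\mathrm{d},k}=p_0$, we have $w[k]=p_0$ for every $k\in\mathcal{S}$. This holds for every partition $\mathcal{S}_{\mathrm{p}}\cup\mathcal{S}_{\mathrm{d}}$, so the average subcarrier power spectrum no longer depends on where the pilots sit. The plan is to substitute this into Proposition~\ref{prop1} and show that, once the pilot \emph{count} is fixed, the numerator and denominator of $R_{\mathrm{MS}}[\ell]$ depend only on $|\mathcal{S}_{\mathrm{d}}|$ and not on the index set $\mathcal{S}_{\mathrm{p}}$.

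First we evaluate the deterministic term. Since $w[k]\equiv p_0$,
\begin{equation*}
\left|\sum_{k=0}^{K-1}w[k]\mathrm{e}^{\mathrm{j}2\uppi k\ell/K}\right|^2 = p_0^2\left|\sum_{k=0}^{K-1}\mathrm{e}^{\mathrm{j}2\uppi k\ell/K}\right|^2 = p_0^2K^2\,\delta[\ell],
\end{equation*}
which follows from the geometric sum of the $K$th roots of unity for $\ell\in\mathcal{L}_K$. The randomness term equals $(\mu_4-1)|\mathcal{S}_{\mathrm{d}}|p_0^2$.

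Next we combine the two terms. For every $\ell\neq 0$ this gives
\begin{equation*}
\mathbb{E}\!\left[|r[\ell]|^2\right]=(\mu_4-1)|\mathcal{S}_{\mathrm{d}}|p_0^2 .
\end{equation*}
The mainlobe is
\begin{equation*}
\mathbb{E}\!\left[|r[0]|^2\right]=P^2+(\mu_4-1)|\mathcal{S}_{\mathrm{d}}|p_0^2, \qquad P=Kp_0 .
\end{equation*}
Dividing, we get
\begin{equation*}
R_{\mathrm{MS}}[\ell]=\frac{(\mu_4-1)|\mathcal{S}_{\mathrm{d}}|}{K^2+(\mu_4-1)|\mathcal{S}_{\mathrm{d}}|}, \qquad \ell\neq 0 .
\end{equation*}
This depends only on $K$, $\mu_4$ and the number of data subcarriers $|\mathcal{S}_{\mathrm{d}}|=K-|\mathcal{S}_{\mathrm{p}}|$. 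It does not depend on which indices form $\mathcal{S}_{\mathrm{p}}$.

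The final step is to state the invariance for a given resource configuration, meaning a fixed $|\mathcal{S}_{\mathrm{p}}|$. All pilot placements then yield identical normalized sidelobes at every nonzero lag. Two consequences follow: the sidelobes are flat across lags, and any derived metric (ROI-EISL, PSLR) is placement-invariant too.

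There is no real obstacle here. The only point needing care is that the invariance is with respect to placement at fixed cardinality: changing the number of pilots does change the sidelobe level through $|\mathcal{S}_{\mathrm{d}}|$. A smaller point is that the root-of-unity sum vanishes exactly on all $\ell\in\mathcal{L}_K\setminus\{0\}$, which holds because $0<|\ell|<K$ there.
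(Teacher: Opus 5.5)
Your proof is correct and follows the paper's argument essentially verbatim. Like the paper, you substitute $w[k]\equiv p_0$ into Proposition~\ref{prop1}, use the vanishing root-of-unity sum for $\ell\neq 0$, and observe that the resulting ratio $(\mu_4-1)|\mathcal{S}_{\mathrm{d}}|/[K^2+(\mu_4-1)|\mathcal{S}_{\mathrm{d}}|]$ depends only on $|\mathcal{S}_{\mathrm{d}}|$, not on which indices form $\mathcal{S}_{\mathrm{p}}$; your explicit remark that the invariance holds at fixed pilot cardinality is a useful clarification that the paper leaves implicit.
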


\begin{IEEEproof}
For any given $\mathcal{S}_{\mathrm{p}}$ and $\mathcal{S}_{\mathrm{d}}$, substituting UPA into the MS P-ACF yields
\begin{equation}
\mathbb{E}\!\left[|r[\ell]|^2\right]
=
p_0^2\left|\sum_{k=0}^{K-1}\mathrm{e}^{\mathrm{j}2\uppi k\ell/K}\right|^2
+
(\mu_4-1)|\mathcal{S}_{\mathrm{d}}|p_0^2.
\end{equation}
Under full bandwidth transmission, the orthogonality of DFT basis functions gives
\begin{equation}
\sum_{k=0}^{K-1}\mathrm{e}^{\mathrm{j}2\uppi k\ell/K}
=
\begin{cases}
K, & \ell=0,\\
0, & \ell\neq0.
\end{cases}
\end{equation}
Therefore,
\begin{equation}
\mathbb{E}\!\left[|r[\ell]|^2\right]
=
\begin{cases}
p_0^2\left[K^2+(\mu_4-1)|\mathcal{S}_{\mathrm{d}}|\right], & \ell=0,\\
p_0^2(\mu_4-1)|\mathcal{S}_{\mathrm{d}}|, & \ell\neq0.
\end{cases}
\end{equation}
Substituting into the normalized MS P-ACF sidelobe expression gives
\begin{equation}\label{nonpilot}
R_{\mathrm{MS}}[\ell]
=
\frac{(\mu_4-1)|\mathcal{S}_{\mathrm{d}}|}
{K^2+(\mu_4-1)|\mathcal{S}_{\mathrm{d}}|},\qquad
\ell\neq0.
\end{equation}
\eqref{nonpilot} does not depend on the specific elements of the pilot set $\mathcal{S}_{\mathrm{p}}$. Therefore, under uniform subcarrier power allocation, changing pilot placement does not alter normalized MS P-ACF sidelobes.
\end{IEEEproof}

Furthermore, pilot placement affects neither the normalized ROI-EISL nor the PSLR.
\subsection{Optimal P-D OFDM Structure for Normalized Global EISL Minimization}
We next remove the uniform subcarrier power constraint and formulate an optimization problem that minimizes the normalized global EISL to analyze the effects of the P-D resource structure. Let $\beta\triangleq(K-1)\mu_4+1$. The normalized global EISL is
\begin{equation}
\mathrm{EISL}^{\mathrm{N}}(\Omega_{\mathrm{g}})=\frac{K\sum_{k\in\mathcal{S}_{\mathrm{p}}}p_{\mathrm{p},k}^2+\beta\sum_{k\in\mathcal{S}_{\mathrm{d}}}p_{\mathrm{d},k}^2-P^2}{P^2+(\mu_4-1)\sum_{k\in\mathcal{S}_{\mathrm{d}}}p_{\mathrm{d},k}^2}.
\end{equation}
\begin{proposition}\label{prop3}
For fixed $|\mathcal{S}_{\mathrm{p}}|$ and $|\mathcal{S}_{\mathrm{d}}|$, the normalized global EISL is invariant to pilot placement.
\end{proposition}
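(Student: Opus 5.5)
The plan is to use Corollary~\ref{coro1} directly: the normalized global EISL displayed above is already fully determined by three aggregate quantities, namely $\sum_{k\in\mathcal{S}_{\mathrm{p}}}p_{\mathrm{p},k}^2$, $\sum_{k\in\mathcal{S}_{\mathrm{d}}}p_{\mathrm{d},k}^2$, and $P$. None of these involves the phases $\mathrm{e}^{\mathrm{j}2\uppi k\ell/K}$, so none depends on which indices $k$ carry pilots. The proposition should therefore reduce to a precise statement of what "changing pilot placement" means. We treat it as relabeling subcarriers while keeping the multiset of pilot powers, the multiset of data powers, and the cardinalities $|\mathcal{S}_{\mathrm{p}}|$, $|\mathcal{S}_{\mathrm{d}}|$ fixed.

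\textbf{Step 1: formalize the relocation.} Let $\pi$ be any permutation of $\mathcal{S}$. Define the new pilot set $\mathcal{S}_{\mathrm{p}}'=\pi(\mathcal{S}_{\mathrm{p}})$ with $p'_{\mathrm{p},\pi(k)}=p_{\mathrm{p},k}$, and define the data set and data powers in the same way. Then $|\mathcal{S}_{\mathrm{p}}'|=|\mathcal{S}_{\mathrm{p}}|$ and $|\mathcal{S}_{\mathrm{d}}'|=|\mathcal{S}_{\mathrm{d}}|$. Any placement with the same cardinalities and the same power assignment to pilot and data subcarriers arises this way.

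\textbf{Step 2: check invariance of each term.} Reindex by $k\mapsto\pi(k)$. The sums $\sum p_{\mathrm{p},k}^2$ and $\sum p_{\mathrm{d},k}^2$, and the total power $P$, are unchanged. The constant $\beta=(K-1)\mu_4+1$ depends only on $K$ and the constellation. Hence both the numerator $K\sum p_{\mathrm{p},k}^2+\beta\sum p_{\mathrm{d},k}^2-P^2$ and the denominator $P^2+(\mu_4-1)\sum p_{\mathrm{d},k}^2$ are unchanged, and so is their ratio.

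\textbf{Step 3: explain the mechanism.} The only place placement could have entered is the deterministic term $\sum_{\ell}|\sum_k w[k]\mathrm{e}^{\mathrm{j}2\uppi k\ell/K}|^2$. By Parseval over the full set of $K$ lags (the content of Corollary~\ref{coro1}), this sum equals $K\sum_k w[k]^2-P^2$, which is permutation invariant. Its distribution across individual lags may change with placement, but not its total.

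There is no real obstacle here. The only subtle point is the interpretation in Step 1. If instead the power allocation is tied to physical subcarrier indices while the pilot/data labels move, the data-power sum would generally change. We therefore state explicitly that invariance holds for fixed per-type power profiles, or equivalently at the optimum over powers for given cardinalities, since the objective depends only on the power multisets.
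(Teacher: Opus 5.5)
Your proposal is correct and follows the paper's argument. The paper's proof simply notes that the closed form of $\mathrm{EISL}^{\mathrm{N}}(\Omega_{\mathrm{g}})$ from Corollary~\ref{coro1} involves only $\sum_{k\in\mathcal{S}_{\mathrm{p}}}p_{\mathrm{p},k}^2$, $\sum_{k\in\mathcal{S}_{\mathrm{d}}}p_{\mathrm{d},k}^2$, and $P$, and never the pilot indices; your permutation formalization and your caveat that the powers must move with their pilot/data labels make explicit what the paper leaves implicit.
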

\begin{IEEEproof}
The expression $\mathrm{EISL}^{\mathrm{N}}(\Omega_{\mathrm{g}})$ does not contain the specific indices of pilot subcarriers. Therefore, pilot placement does not change the value of $\mathrm{EISL}^{\mathrm{N}}(\Omega_{\mathrm{g}})$.
\end{IEEEproof}
The optimization problem therefore depends only on subcarrier powers and is formulated as
\begin{subequations}\label{eislopti}
\begin{align}
\min_{\substack{\{p_{\mathrm{p},k}\}_{k\in\mathcal{S}_{\mathrm{p}}}\\
\{p_{\mathrm{d},k}\}_{k\in\mathcal{S}_{\mathrm{d}}}}}
&\; \mathrm{EISL}^{\mathrm{N}}(\Omega_{\mathrm{g}})
\tag{\theparentequation}\\
\text{s.t.}
&\; \sum_{k\in\mathcal{S}_{\mathrm{p}}}p_{\mathrm{p},k}
+\sum_{k\in\mathcal{S}_{\mathrm{d}}}p_{\mathrm{d},k}=P,
\label{eislopti:a}\\
&\; p_{\mathrm{p},k}\geq0,\; k\in\mathcal{S}_{\mathrm{p}},
\label{eislopti:b}\\
&\; p_{\mathrm{d},k}\geq0,\; k\in\mathcal{S}_{\mathrm{d}}.
\label{eislopti:c}
\end{align}
\end{subequations}
Here, $\mathcal{S}_{\mathrm{p}}$ and $\mathcal{S}_{\mathrm{d}}$ are arbitrary but fixed pilot and data sets. Since the objective function is continuous in the power variables and the feasible set is a nonempty, closed, and bounded simplex, an optimal solution exists.

\begin{theorem}[Optimal solution to normalized global EISL minimization]\label{theoremgeisl}
The optimal power allocation has a two-level power structure. The power allocated to each data subcarrier is
\begin{equation}
p_{\mathrm{d},k}^\star=p_{\mathrm{d}}^\star=\frac{Pt^\star}{|\mathcal{S}_{\mathrm{d}}|},\; k\in\mathcal{S}_{\mathrm{d}},
\end{equation}
and the power allocated to each pilot subcarrier is
\begin{equation}
p_{\mathrm{p},k}^\star=p_{\mathrm{p}}^\star=\frac{P(1-t^\star)}{|\mathcal{S}_{\mathrm{p}}|},\; k\in\mathcal{S}_{\mathrm{p}},
\end{equation}
where
\begin{equation}
\begin{aligned}
t^\star
&=
\frac{2|\mathcal{S}_{\mathrm d}|}
{\zeta+\sqrt{\zeta^2+4|\mathcal{S}_{\mathrm d}|(\mu_4-1)}},\\
\zeta
&=
|\mathcal{S}_{\mathrm d}|+1
+\mu_4\bigl(|\mathcal{S}_{\mathrm p}|-1\bigr).
\end{aligned}
\label{tstar}
\end{equation}
\eqref{tstar} denotes the fraction of total power allocated to data subcarriers. Moreover, $p_{\mathrm{p}}^\star\geq p_{\mathrm{d}}^\star$, with equality if and only if $\mu_4=1$.
\end{theorem}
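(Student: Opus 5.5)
The plan is to reduce the problem to a single scalar variable by separating the \emph{split} of total power between pilots and data from the \emph{distribution} of power within each group. Write $N_{\mathrm p}=|\mathcal S_{\mathrm p}|$, $N_{\mathrm d}=|\mathcal S_{\mathrm d}|$, $A=\sum_{k\in\mathcal S_{\mathrm p}}p_{\mathrm p,k}^2$, $B=\sum_{k\in\mathcal S_{\mathrm d}}p_{\mathrm d,k}^2$ and $c=\mu_4-1\ge 0$. The objective then becomes
\begin{equation*}
f(A,B)=\frac{KA+\beta B-P^2}{P^2+cB}.
\end{equation*}

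\textbf{Step 1: monotonicity in $A$ and $B$.} The numerator is increasing in $A$ and the denominator does not depend on $A$, so $f$ is increasing in $A$. For $B$, the sign of $\partial f/\partial B$ equals the sign of
\begin{equation*}
\beta(P^2+cB)-c(KA+\beta B-P^2)=(\beta+c)P^2-cKA=K\mu_4P^2-(\mu_4-1)KA .
\end{equation*}
Since $A\le P^2$, this quantity is at least $KP^2>0$, so $f$ is strictly increasing in $B$ as well.

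\textbf{Step 2: two-level structure.} Fix the data fraction $t=\sum_{k\in\mathcal S_{\mathrm d}}p_{\mathrm d,k}/P$. By Cauchy--Schwarz, $A\ge P^2(1-t)^2/N_{\mathrm p}$ and $B\ge P^2t^2/N_{\mathrm d}$, with equality if and only if the powers are equal within each group. Because $f$ is increasing in both $A$ and $B$, any optimum must be equal-power within the pilot group and within the data group. This gives the two-level form $p_{\mathrm d}=Pt/N_{\mathrm d}$ and $p_{\mathrm p}=P(1-t)/N_{\mathrm p}$. The degenerate cases $N_{\mathrm p}=0$ or $N_{\mathrm d}=0$ are handled separately, since they force $t=1$ or $t=0$.

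\textbf{Step 3: scalar problem in $t$.} Substituting the two-level form gives
\begin{equation*}
g(t)=\frac{K(1-t)^2/N_{\mathrm p}+\beta t^2/N_{\mathrm d}-1}{1+ct^2/N_{\mathrm d}},\qquad t\in[0,1].
\end{equation*}
I would clear denominators, multiplying through by $N_{\mathrm p}N_{\mathrm d}$, and compute the numerator of $g'(t)$. The cubic terms should cancel, leaving a quadratic $q(t)=\alpha_2t^2+\alpha_1t+\alpha_0$. Using $K=N_{\mathrm p}+N_{\mathrm d}$ and $\beta=(K-1)\mu_4+1$, I expect $q$ to reduce, up to a positive factor, to $c\,t^2+\zeta t-N_{\mathrm d}=0$. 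Here $\zeta=N_{\mathrm d}+1+\mu_4(N_{\mathrm p}-1)$ as in \eqref{tstar}.

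Writing the positive root in rationalized form gives exactly $t^\star=2N_{\mathrm d}/\bigl(\zeta+\sqrt{\zeta^2+4N_{\mathrm d}c}\bigr)$. This form also covers $c=0$, where it reduces to $t^\star=N_{\mathrm d}/\zeta$. Global optimality then follows from the sign pattern. We have $q(0)=-N_{\mathrm d}<0$, so $g$ is decreasing near $0$. The quadratic has exactly one positive root, and $q(1)=c+\zeta-N_{\mathrm d}\ge 0$ since $\zeta\ge N_{\mathrm d}$. Hence $g$ decreases on $[0,t^\star]$ and increases afterwards, so $t^\star\in(0,1]$ is the unique global minimizer.

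\textbf{Step 4: ordering $p_{\mathrm p}^\star\ge p_{\mathrm d}^\star$.} The inequality $p_{\mathrm p}^\star\ge p_{\mathrm d}^\star$ is equivalent to $t^\star\le N_{\mathrm d}/K$. By the sign pattern from Step 3, this is equivalent to $q(N_{\mathrm d}/K)\ge0$. A direct computation should give $q(N_{\mathrm d}/K)\propto c\,N_{\mathrm d}N_{\mathrm p}\cdot(\text{positive})$, which vanishes if and only if $\mu_4=1$; this yields the stated equality condition. For $\mu_4=1$ one can also check directly that $t^\star=N_{\mathrm d}/K$, which is UPA, consistent with \cite{Zhang2026OptimalPower}.

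\textbf{Main obstacle.} The main difficulty I expect is the algebra in Step 3. One must verify that the derivative numerator collapses to exactly $c t^2+\zeta t-N_{\mathrm d}$, up to a positive factor, using $K=N_{\mathrm p}+N_{\mathrm d}$ and the definition of $\beta$, and that its leading-coefficient sign makes the root selection valid on $[0,1]$. I would carry this out by expanding symbolically and grouping terms by powers of $\mu_4$.
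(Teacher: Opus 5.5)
Your proposal is correct and follows essentially the same route as the paper's proof: monotonicity of the objective in the two sums of squared powers, Cauchy--Schwarz to force the two-level structure, reduction to a scalar problem in $t$, and comparison of $t^\star$ with the UPA point $t=|\mathcal{S}_{\mathrm d}|/K$ through the sign of the derivative. The algebra you anticipate does go through: the derivative numerator equals $2KN_{\mathrm p}N_{\mathrm d}\,(c\,t^2+\zeta t-N_{\mathrm d})$, and the quadratic $c\,t^2+\zeta t-N_{\mathrm d}$ evaluated at $t=N_{\mathrm d}/K$ equals $c\,N_{\mathrm d}N_{\mathrm p}(K-1)/K^2$; moreover, your bound $A\le P^2$ and your explicit sign pattern for the quadratic make rigorous two steps that the paper only asserts, namely the positivity of $\partial f/\partial B$ and the unimodality of $f$.
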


\begin{IEEEproof}
See Appendix~\ref{appec}.
\end{IEEEproof}

Data subcarriers incur an additional fourth-order moment term in the normalized MS P-ACF. Consequently, the optimal allocation assigns higher power per pilot subcarrier. However, allocating all power to pilots would concentrate the average power spectrum and increase its deterministic sidelobes. When pilots are removed, the optimal solution reduces to Theorem~1 in \cite{Zhang2026OptimalPower}, for which UPA is optimal. Unlike Theorem~2 in \cite{Zhang2026OptimalPower}, the above two-level allocation does not minimize the normalized MS P-ACF sidelobe at every $\ell$. Under nonuniform power allocation, changing pilot placement rearranges the power levels in the frequency domain and therefore changes the MS P-ACF shape.
\subsection{PSLR Minimization under a Two-Level Power Structure}
Although pilot placement does not affect the normalized global EISL, it can reshape the sidelobe distribution. Joint PSLR minimization over pilot placement and arbitrary subcarrier powers is a mixed-integer problem with no available closed-form solution. We therefore isolate the pilot placement effect by considering a fixed two-level power structure.

Let $p_{\mathrm{p}}$ and $p_{\mathrm{d}}$ denote the two-level powers of pilot and data subcarriers, respectively, and define their difference as $\Delta=p_{\mathrm{p}}-p_{\mathrm{d}}$. Then,
\begin{equation}
\sum_{k=0}^{K-1}w[k]\mathrm{e}^{\mathrm{j}2\uppi k\ell/K}
=
p_{\mathrm{d}}\sum_{k=0}^{K-1}\mathrm{e}^{\mathrm{j}2\uppi k\ell/K}
+
\Delta\sum_{k\in\mathcal{S}_{\mathrm{p}}}\mathrm{e}^{\mathrm{j}2\uppi k\ell/K}.
\end{equation}
By the orthogonality of DFT basis functions, the MS P-ACF sidelobe is
\begin{equation}
\mathbb{E}\!\left[|r[\ell]|^2\right]
=
\Delta^2\left|\sum_{k\in\mathcal{S}_{\mathrm{p}}}\mathrm{e}^{\mathrm{j}2\uppi k\ell/K}\right|^2
+
(\mu_4-1)|\mathcal{S}_{\mathrm{d}}|p_{\mathrm{d}}^2,\qquad \ell\neq0.
\end{equation}
Thus, the P-D coupling in the deterministic term disappears. Since $\mathbb{E}\!\left[|r[0]|^2\right]=P^2+(\mu_4-1)|\mathcal{S}_{\mathrm{d}}|p_{\mathrm{d}}^2$, PSLR minimization can be written as
\begin{equation}
\min_{\mathcal{S}_{\mathrm{p}}}
\frac{\Delta^2\displaystyle\max_{\ell\in\Omega_{\mathrm{g}}}\left|\sum_{k\in\mathcal{S}_{\mathrm{p}}}\mathrm{e}^{\mathrm{j}2\uppi k\ell/K}\right|^2+(\mu_4-1)|\mathcal{S}_{\mathrm{d}}|p_{\mathrm{d}}^2}{P^2+(\mu_4-1)|\mathcal{S}_{\mathrm{d}}|p_{\mathrm{d}}^2}.
\end{equation}
This reduces to
\begin{equation}
\min_{\mathcal{S}_{\mathrm{p}}}\max_{\ell\in\Omega_{\mathrm{g}}}\left|\sum_{k\in\mathcal{S}_{\mathrm{p}}}\mathrm{e}^{\mathrm{j}2\uppi k\ell/K}\right|^2.
\end{equation}
Therefore, under the two-level power structure, PSLR minimization is equivalent to pilot placement design: selecting a set of $|\mathcal{S}_{\mathrm{p}}|$ pilot subcarriers to minimize the maximum sidelobe level generated over $\Omega_{\mathrm{g}}$.

Ideal pilot placement structures have been analyzed in \cite{Bouziane2026OptimizedPilot}. By Parseval's theorem, the aggregate sidelobe level of the pilot placement satisfies
\begin{equation}
\sum_{\ell\in\Omega_{\mathrm{g}}}
\left|\sum_{k\in\mathcal{S}_{\mathrm{p}}}\mathrm{e}^{\mathrm{j}2\uppi k\ell/K}\right|^2
=
|\mathcal{S}_{\mathrm{p}}|\left(K-|\mathcal{S}_{\mathrm{p}}|\right).
\end{equation}
Therefore, the maximum pilot placement sidelobe level is lower bounded by
\begin{equation}
\max_{\ell\in\Omega_{\mathrm{g}}}
\left|\sum_{k\in\mathcal{S}_{\mathrm{p}}}\mathrm{e}^{\mathrm{j}2\uppi k\ell/K}\right|^2
\geq
\frac{|\mathcal{S}_{\mathrm{p}}|\left(K-|\mathcal{S}_{\mathrm{p}}|\right)}{K-1}.
\end{equation}
The lower bound is attained only when the squared pilot pattern sidelobe magnitudes are equal over $\Omega_{\mathrm{g}}$. Furthermore, define the cyclic difference multiplicity as
\begin{equation}
N_{\mathcal{S}_{\mathrm{p}}}[d]
=
\left|\left\{(k_i,k_j)\in\mathcal{S}_{\mathrm{p}}^2:
k_i-k_j\equiv d\pmod K\right\}\right|,
\end{equation}
where $N_{\mathcal{S}_{\mathrm{p}}}[d]$ denotes the number of ordered pairs of pilot indices whose difference modulo $K$ equals $d$. If there exists a positive integer $\lambda$ such that $N_{\mathcal{S}_{\mathrm{p}}}[d]=\lambda$ for every $d\in\{1,2,\ldots,K-1\}$, then $\mathcal{S}_{\mathrm{p}}$ is a cyclic difference set with parameters $\left(K,|\mathcal{S}_{\mathrm{p}}|,\lambda\right)$. It is further shown in \cite{Bouziane2026OptimizedPilot} that equal squared pilot pattern sidelobe magnitudes are equivalent to a cyclic difference set pilot placement. However, cyclic difference sets do not always exist. In particular, let
\begin{equation}
\lambda=\frac{|\mathcal{S}_{\mathrm{p}}|\left(|\mathcal{S}_{\mathrm{p}}|-1\right)}{K-1}.
\end{equation}
If $\lambda$ is not an integer, no cyclic difference set exists.

Although the above discussion only analyzes the structure of ideal solutions to the restricted PSLR minimization problem, it suggests that, under nonuniform power allocation, nonuniform pilot placement provides a potential degree of freedom for reducing normalized MS P-ACF sidelobes.

\section{ROI-Aware P-D Resource Optimization}
We further investigate whether pilot placement and subcarrier power allocation can suppress normalized MS P-ACF sidelobes within the ROI. For data-only OFDM under full bandwidth transmission, UPA minimizes the normalized MS P-ACF sidelobe at every nonzero $\ell$ and hence the normalized global EISL \cite{Zhang2026OptimalPower}. In contrast, the inclusion of pilots provides additional degrees of freedom for sidelobe optimization within the ROI.
\subsection{Normalized ROI-EISL Minimization}
For analytical convenience, define the ROI matrix as
\begin{equation}
[\boldsymbol{C}_{\Omega_\xi}]_{m,n}
=\sum_{\ell\in\Omega_\xi}\mathrm{e}^{\mathrm{j}2\uppi(m-n)\ell/K},
\end{equation}
which is a Hermitian positive semidefinite circulant matrix. Define the subcarrier power vector as $\boldsymbol{w}=[w[0],w[1],\ldots,w[K-1]]^{\mathrm{T}}$ and the data subcarrier indicator matrix as
\begin{equation}
[\boldsymbol{D}_{\mathrm{d}}(\mathcal{S}_{\mathrm{p}})]_{k,k}
=\begin{cases}
1, & k\in\mathcal{S}_{\mathrm{d}},\\
0, & k\in\mathcal{S}_{\mathrm{p}}.
\end{cases}
\end{equation}

\begin{proposition}\label{prop4}
The normalized ROI-EISL can be expressed as
\begin{equation}
\mathrm{EISL}^{\mathrm{N}}(\Omega_\xi)
=
\frac{
\boldsymbol{w}^{\mathrm{T}}\!\left[
\boldsymbol{C}_{\Omega_\xi}
+|\Omega_\xi|(\mu_4-1)\boldsymbol{D}_{\mathrm{d}}(\mathcal{S}_{\mathrm{p}})
\right]\boldsymbol{w}
}{
P^2+(\mu_4-1)\boldsymbol{w}^{\mathrm{T}}
\boldsymbol{D}_{\mathrm{d}}(\mathcal{S}_{\mathrm{p}})\boldsymbol{w}
}.
\end{equation}
\end{proposition}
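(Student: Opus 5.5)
The plan is to rewrite the numerator and denominator of $\mathrm{EISL}^{\mathrm{N}}(\Omega_\xi)$ separately as quadratic forms in $\boldsymbol{w}$, starting from the definition of the ROI-EISL (which rests on Proposition~\ref{prop1}) and the mainlobe expression $\mathbb{E}[|r[0]|^2]=P^2+(\mu_4-1)\sum_{k\in\mathcal{S}_{\mathrm{d}}}p_{\mathrm{d},k}^2$.

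For the deterministic part of the numerator, expand the squared modulus as a double sum:
\begin{equation*}
\left|\sum_{k=0}^{K-1}w[k]\mathrm{e}^{\mathrm{j}2\uppi k\ell/K}\right|^2=\sum_{m=0}^{K-1}\sum_{n=0}^{K-1}w[m]w[n]\mathrm{e}^{\mathrm{j}2\uppi(m-n)\ell/K}.
\end{equation*}
This step uses the fact that $w[k]$ is real and nonnegative, so its conjugate is itself. Summing over $\ell\in\Omega_\xi$ and exchanging the finite sums gives $\sum_{m,n}w[m]w[n][\boldsymbol{C}_{\Omega_\xi}]_{m,n}=\boldsymbol{w}^{\mathrm{T}}\boldsymbol{C}_{\Omega_\xi}\boldsymbol{w}$.

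A side remark justifies using the transpose rather than the Hermitian transpose: $\Omega_\xi$ is symmetric under $\ell\mapsto-\ell$, because $1\le|\ell|\le\xi$ with $\xi<K/2$ in the ROI setting. Hence $[\boldsymbol{C}_{\Omega_\xi}]_{m,n}=2\sum_{\ell=1}^{\xi}\cos(2\uppi(m-n)\ell/K)$ is real symmetric, and the quadratic form is real. The only point needing care is when $\xi=K/2$. There the lag $K/2$ appears without $-K/2$, but $\mathrm{e}^{\mathrm{j}\uppi(m-n)}$ is real anyway, so the conclusion is unaffected.

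For the randomness term, since $\boldsymbol{D}_{\mathrm{d}}(\mathcal{S}_{\mathrm{p}})$ is the diagonal indicator of data subcarriers, $\boldsymbol{w}^{\mathrm{T}}\boldsymbol{D}_{\mathrm{d}}(\mathcal{S}_{\mathrm{p}})\boldsymbol{w}=\sum_{k\in\mathcal{S}_{\mathrm{d}}}w[k]^2=\sum_{k\in\mathcal{S}_{\mathrm{d}}}p_{\mathrm{d},k}^2$. Multiplying this by $|\Omega_\xi|(\mu_4-1)$ gives the second numerator term, and by $(\mu_4-1)$ gives the denominator's second term. Finally, $P=\mathbf{1}^{\mathrm{T}}\boldsymbol{w}$ stays as the constant $P^2$ under the power constraint.

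Combining the two pieces by linearity yields the stated fraction. There is no real obstacle; the only subtle step is confirming that $\boldsymbol{C}_{\Omega_\xi}$ is real symmetric, so that $\boldsymbol{w}^{\mathrm{T}}(\cdot)\boldsymbol{w}$ correctly represents the real squared-magnitude sum.
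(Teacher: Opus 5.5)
Your proposal is correct and follows the paper's proof in Appendix~\ref{apped}: you expand the squared modulus into a double sum, exchange it with the sum over $\ell$ to identify $\boldsymbol{C}_{\Omega_\xi}$, and write the randomness and mainlobe terms through the diagonal indicator $\boldsymbol{D}_{\mathrm{d}}(\mathcal{S}_{\mathrm{p}})$. Your side remark on real symmetry is correct but not needed, because for real $\boldsymbol{w}$ the identity $\sum_{m,n}w[m]w[n][\boldsymbol{C}_{\Omega_\xi}]_{m,n}=\boldsymbol{w}^{\mathrm{T}}\boldsymbol{C}_{\Omega_\xi}\boldsymbol{w}$ holds for any matrix, and the left-hand side is already real.
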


\begin{IEEEproof}
See Appendix~\ref{apped}.
\end{IEEEproof}

Pilot placement affects the normalized ROI-EISL through the joint structure of $\boldsymbol{D}_{\mathrm{d}}(\mathcal{S}_{\mathrm{p}})$ and the power vector. For any given $\Omega_\xi$, the normalized ROI-EISL minimization problem is formulated as
\begin{subequations}\label{ROIEISL}
\begin{align}
\min_{\boldsymbol{w},\,\mathcal{S}_{\mathrm{p}}}
&\; \mathrm{EISL}^{\mathrm{N}}(\Omega_\xi)
\tag{\theparentequation}\\
\text{s.t.}
&\; \mathbf{1}^{\mathrm{T}}\boldsymbol{w}=P,
\label{ROIEISL:a}\\
&\; \boldsymbol{w}\geq\mathbf{0}.
\label{ROIEISL:b}
\end{align}
\end{subequations}
This is a mixed-integer quadratic fractional program.
\subsection{Globally Optimal Strictly Uniform Pilot Placement}
To model communication requirements on individual data subcarriers, we impose a minimum power constraint on each data subcarrier. We first consider a uniform data subcarrier power lower bound, i.e., $p_{\mathrm{d},k}\geq p_{\mathrm{min}}$ for all $k\in\mathcal{S}_{\mathrm{d}}$.

\begin{proposition}\label{prop5}
The normalized ROI-EISL is lower bounded by
\begin{equation}
\mathrm{EISL}^{\mathrm{N}}(\Omega_\xi)
\geq
\frac{
|\Omega_\xi|(\mu_4-1)|\mathcal{S}_{\mathrm{d}}|p_{\mathrm{min}}^2
}{
P^2+(\mu_4-1)|\mathcal{S}_{\mathrm{d}}|p_{\mathrm{min}}^2
}.
\end{equation}
\end{proposition}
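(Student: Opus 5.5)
Using Proposition~\ref{prop4}, write the objective as $N(\boldsymbol{w})/D(\boldsymbol{w})$. Put $Q\triangleq(\mu_4-1)\boldsymbol{w}^{\mathrm{T}}\boldsymbol{D}_{\mathrm{d}}(\mathcal{S}_{\mathrm{p}})\boldsymbol{w}=(\mu_4-1)\sum_{k\in\mathcal{S}_{\mathrm{d}}}p_{\mathrm{d},k}^2$. The plan has three steps. First, lower bound the numerator by discarding the deterministic term. Second, observe that the resulting ratio is monotone in $Q$. Third, lower bound $Q$ using the power floor $p_{\mathrm{d},k}\ge p_{\min}$.

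\textbf{Step 1 (drop the deterministic term).} Since $\boldsymbol{C}_{\Omega_\xi}\succeq\mathbf{0}$, we have $\boldsymbol{w}^{\mathrm{T}}\boldsymbol{C}_{\Omega_\xi}\boldsymbol{w}\ge 0$. Equivalently, this term equals $\sum_{\ell\in\Omega_\xi}\bigl|\sum_k w[k]\mathrm{e}^{\mathrm{j}2\uppi k\ell/K}\bigr|^2\ge0$ directly from the ROI-EISL definition. The denominator is $P^2+Q>0$. Hence
\begin{equation}
\mathrm{EISL}^{\mathrm{N}}(\Omega_\xi)\ \ge\ \frac{|\Omega_\xi|\,Q}{P^2+Q}.
\end{equation}

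\textbf{Step 2 (monotonicity).} The map $f(Q)=|\Omega_\xi|Q/(P^2+Q)=|\Omega_\xi|\bigl(1-P^2/(P^2+Q)\bigr)$ is nondecreasing for $Q\ge0$. If $\mu_4=1$, then $Q=0$ and the claimed bound is $0$, so it holds trivially. Hence it suffices to treat $\mu_4>1$ and lower bound $Q$.

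\textbf{Step 3 (use the power floor).} Each data subcarrier satisfies $p_{\mathrm{d},k}\ge p_{\min}\ge0$, so $p_{\mathrm{d},k}^2\ge p_{\min}^2$. Summing over $\mathcal{S}_{\mathrm{d}}$ gives $Q\ge(\mu_4-1)|\mathcal{S}_{\mathrm{d}}|p_{\min}^2$. Substituting this into $f$ yields the stated bound. The bound holds for every pilot placement and every feasible $\boldsymbol{w}$.

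\textbf{Main obstacle.} Step 2 carries the real content: the random term appears in both the numerator and the denominator, so one must confirm that decreasing $Q$ cannot increase the ratio. Writing $f$ in the form $1-P^2/(P^2+Q)$ settles this. Attainability of the bound, which requires $\boldsymbol{C}_{\Omega_\xi}\boldsymbol{w}$-energy $=0$ with data powers exactly $p_{\min}$, is not claimed here. It is presumably the subject of the subsequent uniform-pilot construction, which needs $\xi<|\mathcal{S}_{\mathrm{p}}|$.
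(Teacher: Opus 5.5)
Your proof is correct and follows essentially the same route as the paper. You drop the positive semidefinite term $\boldsymbol{w}^{\mathrm T}\boldsymbol{C}_{\Omega_\xi}\boldsymbol{w}\ge 0$, use monotonicity of the remaining ratio, and then apply $\sum_{k\in\mathcal{S}_{\mathrm d}}p_{\mathrm d,k}^2\ge|\mathcal{S}_{\mathrm d}|p_{\min}^2$. The only cosmetic difference is that you absorb $(\mu_4-1)$ into $Q$ and so treat $\mu_4=1$ separately, whereas the paper works directly with $q=\sum_{k\in\mathcal{S}_{\mathrm d}}p_{\mathrm d,k}^2$.
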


\begin{IEEEproof}
See Appendix~\ref{appee}.
\end{IEEEproof}

\begin{theorem}[Global optimality of strictly uniform pilot placement]\label{theorem2}
Suppose that $|\mathcal{S}_{\mathrm{p}}|$ divides $K$, $P\geq|\mathcal{S}_{\mathrm{d}}|p_{\mathrm{min}}$, and $\xi<|\mathcal{S}_{\mathrm{p}}|$. A strictly uniform pilot placement can be constructed as
\begin{equation}
\mathcal{S}_{\mathrm{p}}^{\star}
=
\left\{
\left(k_0+m\frac{K}{|\mathcal{S}_{\mathrm{p}}|}\right)\bmod K
\;\middle|\; m=0,1,\ldots,|\mathcal{S}_{\mathrm{p}}|-1
\right\},
\end{equation}
where $k_0$ denotes the starting subcarrier. With
\begin{equation}
\begin{aligned}
p_{\mathrm{d},k}^{\star}&=p_{\mathrm{min}}, && k\in\mathcal{S}_{\mathrm{d}}^{\star},\\
p_{\mathrm{p},k}^{\star}&=\frac{P-|\mathcal{S}_{\mathrm{d}}^{\star}|p_{\mathrm{min}}}{|\mathcal{S}_{\mathrm{p}}^{\star}|}, && k\in\mathcal{S}_{\mathrm{p}}^{\star},
\end{aligned}
\end{equation}
the normalized ROI-EISL attains its lower bound.
\end{theorem}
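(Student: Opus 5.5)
The plan is to evaluate the normalized ROI-EISL directly at the proposed configuration and show it equals the bound of Proposition~\ref{prop5}. Since that bound holds for every feasible $(\boldsymbol{w},\mathcal{S}_{\mathrm{p}})$ with the given $|\mathcal{S}_{\mathrm{d}}|$ and the constraint $p_{\mathrm{d},k}\geq p_{\mathrm{min}}$, attaining it immediately gives global optimality.

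First, check feasibility. The pilot power $p_{\mathrm{p}}^\star=(P-|\mathcal{S}_{\mathrm{d}}|p_{\mathrm{min}})/|\mathcal{S}_{\mathrm{p}}|$ is nonnegative because $P\geq|\mathcal{S}_{\mathrm{d}}|p_{\mathrm{min}}$. The total power sums to $P$, and every data subcarrier meets its lower bound with equality.

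Next, the configuration is a two-level power structure with $p_{\mathrm{d}}=p_{\mathrm{min}}$ and $\Delta=p_{\mathrm{p}}^\star-p_{\mathrm{min}}$. The decomposition used in the PSLR subsection then applies:
\begin{equation*}
\sum_{k=0}^{K-1}w[k]\mathrm{e}^{\mathrm{j}2\uppi k\ell/K}=\Delta\sum_{k\in\mathcal{S}_{\mathrm{p}}^\star}\mathrm{e}^{\mathrm{j}2\uppi k\ell/K},\qquad \ell\neq0 .
\end{equation*}
The key computation is the pilot-pattern sum for the strictly uniform set. Write $M=|\mathcal{S}_{\mathrm{p}}|$ and $k=k_0+mK/M$. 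The sum becomes
\begin{equation*}
\mathrm{e}^{\mathrm{j}2\uppi k_0\ell/K}\sum_{m=0}^{M-1}\mathrm{e}^{\mathrm{j}2\uppi m\ell/M}.
\end{equation*}
The reduction modulo $K$ is harmless because $\ell$ is an integer. The geometric sum vanishes unless $M\mid\ell$. For every $\ell\in\Omega_\xi$ we have $1\leq|\ell|\leq\xi<M$, so $M\nmid\ell$. Hence the deterministic term vanishes on the whole ROI.

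Finally, substitute into the definition of $\mathrm{EISL}(\Omega_\xi)$, or equivalently into Proposition~\ref{prop4}. Only the randomness term survives, giving
\begin{equation*}
\mathrm{EISL}(\Omega_\xi)=|\Omega_\xi|(\mu_4-1)|\mathcal{S}_{\mathrm{d}}|p_{\mathrm{min}}^2 .
\end{equation*}
The mainlobe is $P^2+(\mu_4-1)|\mathcal{S}_{\mathrm{d}}|p_{\mathrm{min}}^2$. Their ratio is exactly the lower bound of Proposition~\ref{prop5}.

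The main point needing care is that the bound of Proposition~\ref{prop5} is configuration-independent, i.e., it depends on $|\mathcal{S}_{\mathrm{d}}|$ but not on the placement. This is what makes attaining it a certificate of global optimality over both $\boldsymbol{w}$ and $\mathcal{S}_{\mathrm{p}}$. I would recall why it holds: the deterministic term is nonnegative, and the remaining expression $ax/(P^2+bx)$ with $x=\sum p_{\mathrm{d},k}^2$ is increasing in $x$, while $x\geq|\mathcal{S}_{\mathrm{d}}|p_{\mathrm{min}}^2$. The role of the condition $\xi<|\mathcal{S}_{\mathrm{p}}|$ is then clear: it places the grating lobes of the uniform comb, at multiples of $K/|\mathcal{S}_{\mathrm{p}}|$ in the frequency-spacing sense and thus at lags that are multiples of $M$, outside the ROI.
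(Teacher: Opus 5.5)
Your proposal is correct and follows the paper's proof essentially step for step. Both use the two-level decomposition with $\Delta=p_{\mathrm{p}}^\star-p_{\mathrm{min}}$, the geometric-series cancellation of the uniform pilot comb for $1\le|\ell|\le\xi<|\mathcal{S}_{\mathrm{p}}|$ (giving $(\boldsymbol{w}^\star)^{\mathrm{T}}\boldsymbol{C}_{\Omega_\xi}\boldsymbol{w}^\star=0$), and $\sum_{k\in\mathcal{S}_{\mathrm{d}}}p_{\mathrm{d},k}^2=|\mathcal{S}_{\mathrm{d}}|p_{\mathrm{min}}^2$ to reach the bound of Proposition~\ref{prop5}; your explicit feasibility check and your remark that this bound is placement-independent, and so certifies global optimality, make explicit what the paper leaves implicit.
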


\begin{IEEEproof}
See Appendix~\ref{appef}.
\end{IEEEproof}

This construction is not necessarily unique. Any feasible pilot placement and power allocation satisfying $\boldsymbol{w}^{\mathrm{T}}\boldsymbol{C}_{\Omega_\xi}\boldsymbol{w}=0$ and 
$\boldsymbol{w}^{\mathrm{T}}\boldsymbol{D}_{\mathrm{d}}(\mathcal{S}_{\mathrm{p}})\boldsymbol{w}
=|\mathcal{S}_{\mathrm{d}}|p_{\mathrm{min}}^2$ achieves the same global optimum.

\begin{remark}
More generally, the same construction attains the lower bound for any ROI $\Omega$ satisfying $\ell\not\equiv0\pmod{|\mathcal{S}_{\mathrm{p}}|}$ for every $\ell\in\Omega$. The resulting MS P-ACF has a comb-like structure, and its sidelobes attain the minimum value at all indices satisfying this congruence condition. The optimality result relies on the uniform data subcarrier power lower bound, which yields the required two-level structure; it does not generally extend to nonuniform data subcarrier power lower bounds.
\end{remark}
 
\subsection{Bisection-and-Branch-and-Bound Algorithm for Normalized ROI-EISL Minimization}
For the more general case, we solve the normalized ROI-EISL minimization problem. 
\subsubsection{Problem Analysis}
Since no sidelobe of any normalized MS P-ACF exceeds its mainlobe, the normalized ROI-EISL satisfies $0 \leq \mathrm{EISL}^{\mathrm{N}}(\Omega_\xi) \leq |\Omega_\xi|$. For any candidate objective value $\eta\in[0,|\Omega_\xi]$, define
\begin{equation}
\boldsymbol{A}_{\eta}
=
\boldsymbol{C}_{\Omega_\xi}
+(\mu_4-1)(|\Omega_\xi|-\eta)
\boldsymbol{D}_{\mathrm{d}}(\mathcal{S}_{\mathrm{p}}),
\end{equation}
and consider the mixed-integer quadratic programming (MIQP) problem
\begin{subequations}\label{miqp}
\begin{align}
q(\eta)=\min_{\boldsymbol{w}}
&\; \boldsymbol{w}^{\mathrm{T}}\boldsymbol{A}_{\eta}\boldsymbol{w}
\tag{\theparentequation}\\
\text{s.t.}
&\; \eqref{ROIEISL:a},\ \eqref{ROIEISL:b},\ \text{and}
\notag\\
&\; w[k]\geq p_{\mathrm{min},k},\; k\in\mathcal{S}_{\mathrm{d}}.
\label{paramqp:a}
\end{align}
\end{subequations}
Here, $p_{\mathrm{min},k}$ is the prescribed power lower bound if the $k$th subcarrier is assigned to data transmission.

\begin{proposition}\label{prop6}
For a fixed pilot set $\mathcal{S}_{\mathrm{p}}$, the continuous power allocation subproblem of \eqref{ROIEISL} is quasiconvex, and its optimal value is given by
\begin{equation}
F^{\star}
=
\min\left\{
\eta\in[0,|\Omega_\xi|]:
q(\eta)\leq\eta P^2
\right\}.
\end{equation}
It can be obtained by bisection over $\eta\in[0,|\Omega_\xi|]$.
\end{proposition}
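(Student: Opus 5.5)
We follow the standard Dinkelbach-type parametric argument for linear-fractional objectives with quadratic numerator and denominator. Fix $\mathcal{S}_{\mathrm{p}}$ and let $\mathcal{W}$ denote the feasible set defined by \eqref{ROIEISL:a}, \eqref{ROIEISL:b} and \eqref{paramqp:a}. This set is a nonempty (assuming $P\geq\sum_{k\in\mathcal{S}_{\mathrm d}}p_{\mathrm{min},k}$), compact polytope.

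Write the objective of Proposition~\ref{prop4} as $f(\boldsymbol{w})=N(\boldsymbol{w})/M(\boldsymbol{w})$, where
\[
N(\boldsymbol{w})=\boldsymbol{w}^{\mathrm T}\bigl[\boldsymbol{C}_{\Omega_\xi}+|\Omega_\xi|(\mu_4-1)\boldsymbol{D}_{\mathrm d}\bigr]\boldsymbol{w}
\]
and
\[
M(\boldsymbol{w})=P^2+(\mu_4-1)\boldsymbol{w}^{\mathrm T}\boldsymbol{D}_{\mathrm d}\boldsymbol{w}.
\]
Since $\mu_4\geq1$, we have $M(\boldsymbol{w})\geq P^2>0$ on $\mathcal{W}$.

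The first step is the pointwise equivalence
\[
f(\boldsymbol{w})\leq\eta \iff N(\boldsymbol{w})-\eta M(\boldsymbol{w})\leq0 .
\]
A direct expansion gives
\[
N(\boldsymbol{w})-\eta M(\boldsymbol{w})=\boldsymbol{w}^{\mathrm T}\boldsymbol{A}_\eta\boldsymbol{w}-\eta P^2 .
\]
Both $\boldsymbol{C}_{\Omega_\xi}\succeq\mathbf{0}$ and $\boldsymbol{D}_{\mathrm d}\succeq\mathbf{0}$ hold. Hence for $\eta\in[0,|\Omega_\xi|]$ the coefficient $(\mu_4-1)(|\Omega_\xi|-\eta)$ is nonnegative and $\boldsymbol{A}_\eta\succeq\mathbf{0}$. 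Each sublevel set
\[
\{\boldsymbol{w}\in\mathcal{W}: f(\boldsymbol{w})\leq\eta\}=\{\boldsymbol{w}\in\mathcal{W}:\boldsymbol{w}^{\mathrm T}\boldsymbol{A}_\eta\boldsymbol{w}\leq\eta P^2\}
\]
is therefore the intersection of a polytope with a convex quadratic sublevel set, and so it is convex.

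For $\eta>|\Omega_\xi|$ the sublevel set is all of $\mathcal{W}$, because $f\leq|\Omega_\xi|$ (a normalized sidelobe never exceeds $1$, as noted above). For $\eta<0$ the sublevel set is empty. This gives quasiconvexity of $f$ on $\mathcal{W}$. The main point needing care is exactly this restriction of $\eta$ to $[0,|\Omega_\xi|]$, which is what keeps $\boldsymbol{A}_\eta$ positive semidefinite.

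The second step characterizes $F^\star=\min_{\mathcal{W}}f$. By the equivalence, $F^\star\leq\eta$ holds if and only if some $\boldsymbol{w}\in\mathcal{W}$ satisfies $\boldsymbol{w}^{\mathrm T}\boldsymbol{A}_\eta\boldsymbol{w}\leq\eta P^2$. Since $\mathcal{W}$ is compact and the quadratic form is continuous, the minimum $q(\eta)$ is attained, so this condition is equivalent to $q(\eta)\leq\eta P^2$. The existence of an optimum of $f$ follows from the same continuity and compactness argument used before Theorem~\ref{theoremgeisl}. Together these yield the formula
\[
F^\star=\min\{\eta\in[0,|\Omega_\xi|]:q(\eta)\leq\eta P^2\}.
\]
The minimum in this formula exists for two reasons. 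The feasible $\eta$ form the interval $[F^\star,|\Omega_\xi|]$, which is nonempty because $F^\star\leq|\Omega_\xi|$. It is closed because $F^\star$ itself is attained.

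For the bisection claim, the feasibility predicate $q(\eta)\leq\eta P^2$ is monotone in $\eta$: if it holds at $\eta$ it holds at every $\eta'\geq\eta$, by the sublevel-set nesting. Each test is a convex QP for fixed $\mathcal{S}_{\mathrm p}$. Bisection on $[0,|\Omega_\xi|]$ therefore brackets $F^\star$ to any prescribed accuracy in $\log_2(|\Omega_\xi|/\epsilon)$ iterations.
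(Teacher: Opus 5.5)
Your proposal is correct and follows essentially the same route as the paper's proof. Both rely on the sublevel-set equivalence $\mathrm{EISL}^{\mathrm N}(\Omega_\xi)\leq\eta \iff \boldsymbol{w}^{\mathrm T}\boldsymbol{A}_\eta\boldsymbol{w}\leq\eta P^2$, on $\boldsymbol{A}_\eta\succeq\mathbf{0}$ for $\eta\in[0,|\Omega_\xi|]$ making each test a convex QP, and on monotonicity of feasibility in $\eta$ to justify bisection. You additionally make explicit several details the paper leaves implicit: the positivity of the denominator, the cases $\eta<0$ and $\eta>|\Omega_\xi|$, and the compactness argument that ensures $q(\eta)$ and $F^\star$ are attained, so that the minimum in the formula exists.
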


\begin{IEEEproof}
See Appendix~\ref{appeg}.
\end{IEEEproof}

We now reconsider $\mathcal{S}_{\mathrm{p}}$. In principle, \eqref{ROIEISL} can be solved by enumerating all pilot placements, applying Proposition~\ref{prop6} to obtain the corresponding optimal value, and selecting the pilot placement with the smallest optimum. Direct enumeration requires $\binom{K}{|\mathcal{S}_{\mathrm{p}}|}$ candidates, which rapidly becomes prohibitive as $K$ increases. Greedy and exchange-based heuristics can produce feasible solutions but generally cannot guarantee global optimality. This motivates the bisection-and-branch-and-bound algorithm described next.

\subsubsection{Bisection-and-Branch-and-Bound Algorithm}
To facilitate optimization, we explicitly incorporate pilot placement into the decision variables by defining the binary indicator
\begin{equation}
x_k=
\begin{cases}
1, & k\in\mathcal{S}_{\mathrm{p}},\\
0, & k\in\mathcal{S}_{\mathrm{d}},
\end{cases},
\end{equation}
We further introduce the pilot and data power variables $u_k=p_{\mathrm{p},k}$ for $x_k=1$ and $v_k=p_{\mathrm{d},k}$ for $x_k=0$. The variable $u_k$ is nonzero only on pilot subcarriers, whereas $v_k$ is nonzero only on data subcarriers. Define
$\boldsymbol{u}=[u_0,u_1,\ldots,u_{K-1}]^{\mathrm{T}}$,
$\boldsymbol{v}=[v_0,v_1,\ldots,v_{K-1}]^{\mathrm{T}}$, and
$\boldsymbol{x}=[x_0,x_1,\ldots,x_{K-1}]^{\mathrm{T}}$.
For compactness, define the objective function as
\begin{equation}\label{jointObjective}
\mathcal{J}(\boldsymbol{u},\boldsymbol{v})
=
\frac{
(\boldsymbol{u}+\boldsymbol{v})^{\mathrm{T}}
\boldsymbol{C}_{\Omega_\xi}(\boldsymbol{u}+\boldsymbol{v})+|\Omega_\xi|(\mu_4-1)\boldsymbol{v}^{\mathrm{T}}\boldsymbol{v}
}{
P^2+(\mu_4-1)\boldsymbol{v}^{\mathrm{T}}\boldsymbol{v}
}.
\end{equation}
Then, \eqref{ROIEISL} can be equivalently written as
\begin{subequations}\label{jointROIEISL}
\begin{align}
\min_{\boldsymbol{u},\,\boldsymbol{v},\,\boldsymbol{x}}
&\; \mathcal{J}(\boldsymbol{u},\boldsymbol{v})
\tag{\theparentequation}\\
\text{s.t.}
&\; \mathbf{1}^{\mathrm{T}}(\boldsymbol{u}+\boldsymbol{v})=P,
\label{jointROIEISL:a}\\
&\; \sum_{k=0}^{K-1}x_k=|\mathcal{S}_{\mathrm{p}}|,
\label{jointROIEISL:b}\\
&\; 0\leq u_k\leq Px_k,\; \forall k\in\mathcal{S},
\label{jointROIEISL:c}\\
&\; p_{\mathrm{min},k}(1-x_k)\leq v_k\leq P(1-x_k),\; \forall k\in\mathcal{S},
\label{jointROIEISL:d}\\
&\; x_k\in\{0,1\},\; \forall k\in\mathcal{S}.
\label{jointROIEISL:e}
\end{align}
\end{subequations}

We first perform outer bisection. At the $i$th iteration, let
\begin{equation}
\eta_i=\frac{\eta_{\mathrm{L}}+\eta_{\mathrm{U}}}{2},
\end{equation}
where $\eta_{\mathrm{L}}$ and $\eta_{\mathrm{U}}$ are the current lower and upper bounds on the optimal value, respectively. For a given $\eta_i$, define
\begin{equation}\label{phiEta}
\Phi_{\eta_i}(\boldsymbol{u},\boldsymbol{v})
=
(\boldsymbol{u}+\boldsymbol{v})^{\mathrm{T}}
\boldsymbol{C}_{\Omega_\xi}(\boldsymbol{u}+\boldsymbol{v})+(\mu_4-1)(|\Omega_\xi|-\eta_i)\boldsymbol{v}^{\mathrm{T}}\boldsymbol{v}.
\end{equation}
The joint feasibility test can then be formulated as
\begin{equation}\label{bigg}
\begin{aligned}
G(\eta_i)=\min_{\boldsymbol{u},\,\boldsymbol{v},\,\boldsymbol{x}}
&\; \Phi_{\eta_i}(\boldsymbol{u},\boldsymbol{v}),\\
\text{s.t.}
&\; \eqref{jointROIEISL:a}\text{--}\eqref{jointROIEISL:e}.
\end{aligned}
\end{equation}
Following the proof in Appendix~\ref{appeg}, determining whether there exist $\boldsymbol{u}$, $\boldsymbol{v}$, and $\boldsymbol{x}$ such that $\mathcal{J}(\boldsymbol{u},\boldsymbol{v})\leq\eta_i$ is equivalent to checking whether $G(\eta_i)\leq\eta_iP^2$. If this condition holds, we update $\eta_{\mathrm{U}}=\eta_i$; otherwise, we update $\eta_{\mathrm{L}}=\eta_i$. This procedure continues until the prescribed accuracy is attained.

For each $\eta_i$, problem \eqref{bigg} is solved using branch-and-bound (BnB). Each node $\mathcal{N}$ represents a partial assignment of $\boldsymbol{x}$, where $\mathcal{F}_1(\mathcal{N})$, $\mathcal{F}_0(\mathcal{N})$, and $\mathcal{F}(\mathcal{N})$ denote the indices fixed as pilots, fixed as data subcarriers, and unfixed, respectively. Each node is processed by continuous relaxation, integer-solution checking, rounding, and branching as follows.

\begin{enumerate}
\renewcommand{\labelenumi}{(\alph{enumi})}
\item \emph{Continuous relaxation and pruning:} Relax \eqref{jointROIEISL:e} to $0\leq x_k\leq1$ and solve the resulting convex QP, yielding the lower bound $L_{\mathcal{N}}$ and solution $(\boldsymbol{u}^{\mathrm{rel}},\boldsymbol{v}^{\mathrm{rel}},\boldsymbol{x}^{\mathrm{rel}})$. If $L_{\mathcal{N}}>\eta_iP^2$, node $\mathcal{N}$ is pruned.

\item \emph{Integer solution checking:} If $L_{\mathcal{N}}\leq\eta_iP^2$ and $\boldsymbol{x}^{\mathrm{rel}}$ is integral, a feasible solution to \eqref{bigg} is obtained and BnB terminates.

\item \emph{Rounding:} If $\boldsymbol{x}^{\mathrm{rel}}$ is fractional, retain the pilots in $\mathcal{F}1(\mathcal{N})$ and select the unfixed indices with the largest $x_k^{\mathrm{rel}}$ until $|\mathcal{S}{\mathrm{p}}|$ pilots are obtained. Fix the resulting $\widehat{\boldsymbol{x}}$ and reoptimize $\boldsymbol{u}$ and $\boldsymbol{v}$; if $\Phi_{\eta_i}(\boldsymbol{u},\boldsymbol{v})\leq\eta_iP^2$, BnB terminates.

\item \emph{Branching:}
Otherwise, select
\begin{equation}
k^{\star}
=
\arg\min_{\substack{k\in\mathcal{F}(\mathcal{N})\\
0<x_k^{\mathrm{rel}}<1}}
\left|x_k^{\mathrm{rel}}-\frac{1}{2}\right|,
\end{equation}
and generate two child nodes with $x_{k^\star}=1$ and $x_{k^\star}=0$.

\end{enumerate}

\begin{remark}
If the BnB termination criterion is reached before feasibility at $\eta_i$ can be certified, the algorithm terminates the outer bisection and reports the best feasible incumbent together with its actual objective value as an approximate solution. When $(\mu_4-1)\boldsymbol{v}^{\mathrm{T}}\boldsymbol{v}\ll P^2$, minimizing the unnormalized ROI-EISL can be used as a lower-complexity approximation.
\end{remark}

\subsubsection{Complexity Analysis}
The number of bisection iterations required to attain the target accuracy
$\varepsilon$ is $N_{\eta}
=
\max\left\{
0,
\left\lceil
\log_2\frac{|\Omega_\xi|}{\varepsilon}
\right\rceil
\right\}$. The BnB tree contains $N_{\mathrm{node}}=O(2^K)$ nodes in the worst case. Solving the convex QP at each node using an interior-point method has complexity $O\!\left(K^{3.5}\log\frac{1}{\varepsilon_{\mathrm{c}}}\right)$, where $\varepsilon_{\mathrm{c}}$ is the accuracy of the continuous QP solver. Therefore, the worst-case complexity is $O\!\left(
N_{\eta}2^K K^{3.5}
\log\frac{1}{\varepsilon_{\mathrm{c}}}
\right)$.

\begin{figure*}[!t]
 \centering
 \includegraphics[width=\textwidth]
 {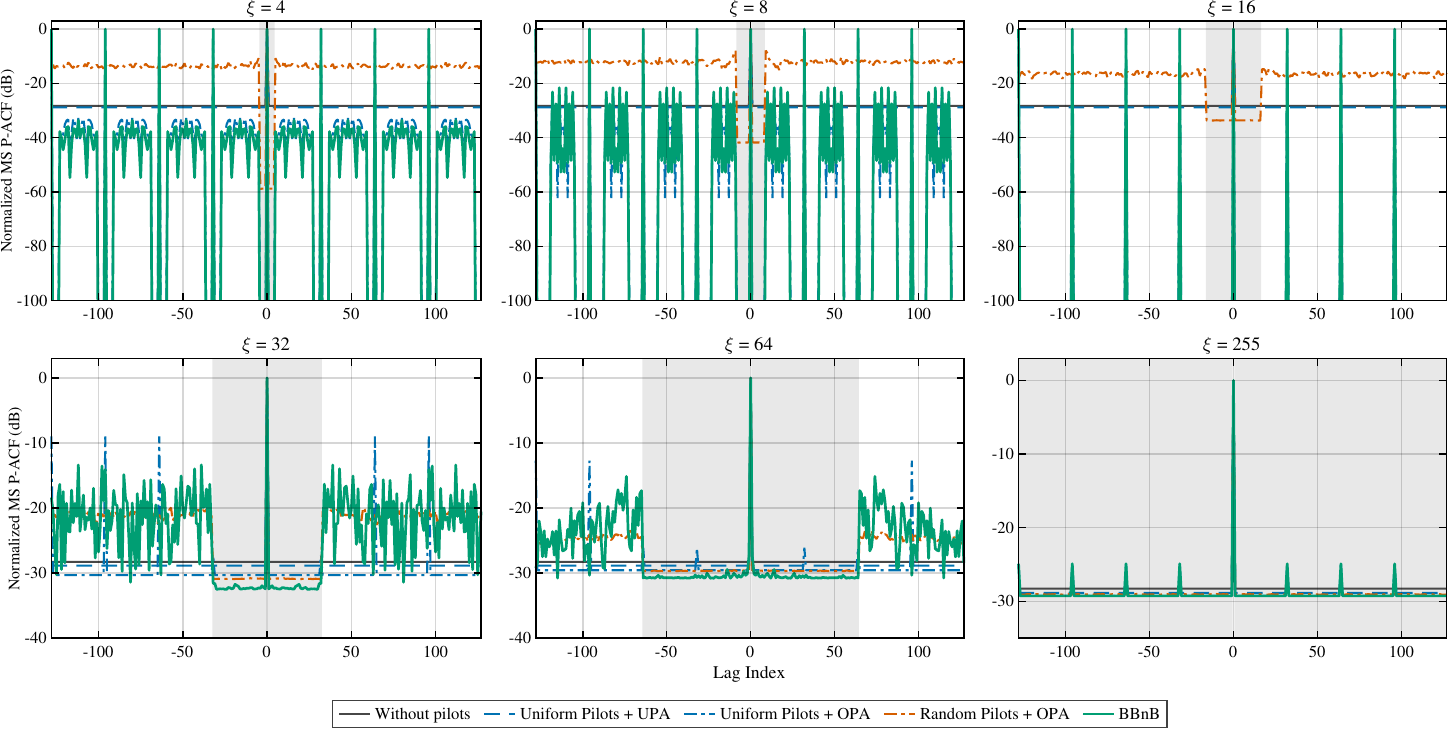}
 \caption{Normalized MS P-ACFs of different P-D OFDM configurations for varying ROI sizes without data subcarrier power constraints.}
 \label{fig1}
\end{figure*}

\section{Numerical Results}
This section presents numerical results to validate the theoretical analysis and optimization strategy. Unless otherwise specified, the baseline configuration uses $K=256$, a pilot fraction of $12.5\%$, $P=1$, and 64-QAM modulation. Results involving random pilot placements or random power lower bounds are averaged over 100 Monte Carlo trials, except where explicitly noted.

Without data subcarrier power constraints, Fig.~\ref{fig1} compares the normalized MS P-ACFs of BBnB with four baseline P-D OFDM configurations over different ROI sizes. Here, optimal power allocation (OPA) denotes power optimization for a fixed pilot placement. The four baseline schemes are
\begin{itemize}
 \item ``Without Pilots,'' which employs optimal power allocation with data-only transmission.
 \item ``Uniform Pilots + UPA,'' which uses uniform pilot placement and uniform power allocation across all subcarriers.
 \item ``Uniform Pilots + OPA,'' which uses uniform pilot placement and optimal power allocation.
 \item ``Random Pilots + OPA,'' which uses randomly placed pilots and optimal power allocation.
\end{itemize}

\begin{figure}[h]
\centering
\includegraphics[width=3.4in]{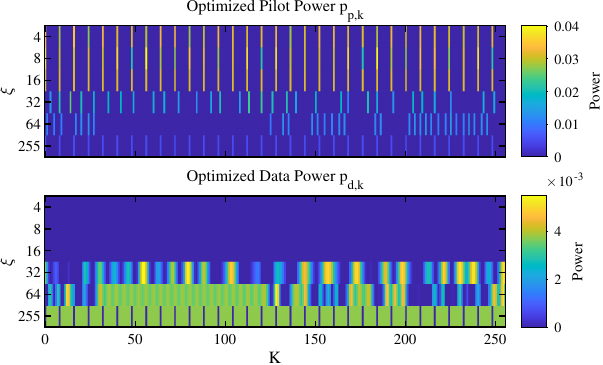}
\caption{Pilot placement and subcarrier power allocation obtained by BBnB for varying ROI sizes without data subcarrier power constraints.}
\label{fig2}
\end{figure}

The BBnB solution illustrates the joint effects of pilot placement and subcarrier power allocation on normalized MS P-ACF sidelobes. As the ROI shrinks, the optimized normalized MS P-ACF sidelobes become lower within the ROI. The result for ``Uniform Pilots + OPA'' at $\xi=16$ is consistent with Theorem~\ref{theorem2}: because $|\mathcal{S}_{\mathrm{p}}|=32$ and $\xi<32$, a strictly uniform pilot placement with a two-level power structure is globally optimal. The results for smaller ROIs, together with the pilot placements and power allocations in Fig.~\ref{fig2}, also indicate that this construction need not be unique. For $\xi=32$, BBnB achieves the lowest sidelobe level within the ROI among the compared schemes, compared with ``Uniform Pilots + OPA'', it reduces the ROI sidelobe level by up to 2.19 dB. Fig.~\ref{fig2} depicts the variations in the P-D resource configuration of the BBnB solution as the ROI expands. The pilot placement transitions from uniform to nonuniform and then back to uniform, while the fraction of total power allocated to pilots gradually decreases. This is because grating lobes induced by pilots become increasingly significant as the optimization region expands, requiring a larger fraction of the transmit power to be allocated to data subcarriers to reduce the sidelobe level.

To compare different ROI sizes, we report the region of interest expected sidelobe level (ROI-ESL), defined as the normalized ROI-EISL per lag, $\mathrm{ROI\mbox{-}ESL}\triangleq\mathrm{EISL}^{\mathrm{N}}(\Omega_\xi)/|\Omega_\xi|$.

\begin{figure}[h]
\centering
\includegraphics[width=3.3in]{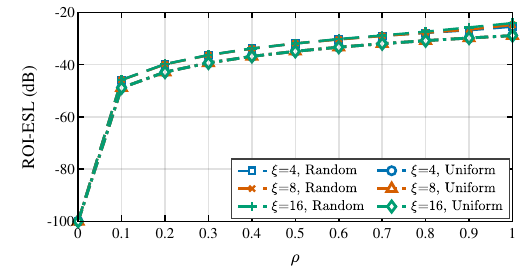}
\caption{ROI-ESL versus the data subcarrier power lower bound coefficient $\rho$ under uniform and random data subcarrier power lower bounds for different ROI sizes.}
\label{fig3}
\end{figure}

\begin{figure}[h]
\centering
\includegraphics[width=3.3in]{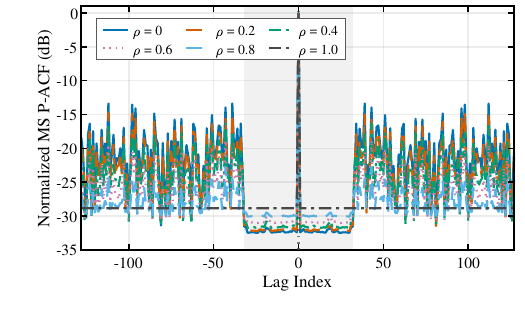}
\caption{Normalized MS P-ACFs obtained by BBnB for different data subcarrier power lower bound coefficients $\rho$ with $\xi=32$.}
\label{fig4}
\end{figure}

\begin{figure}[h]
\centering
\includegraphics[width=3.3in]{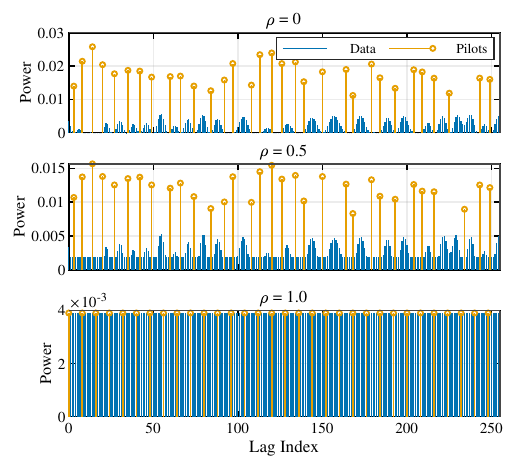}
\caption{Pilot placement and subcarrier power allocation obtained by BBnB for different data subcarrier power lower bound coefficients $\rho$ with $\xi=32$.}
\label{fig5}
\end{figure}

\begin{figure}[h]
\centering
\includegraphics[width=3.3in]{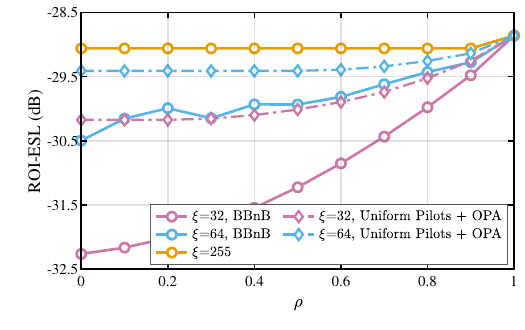}
\caption{ROI-ESL versus the data subcarrier power lower bound coefficient $\rho$ for BBnB and ``Uniform Pilots + OPA'' under uniform data subcarrier power lower bounds.}
\label{fig6}
\end{figure}

Next, we impose data subcarrier power constraints. Fig.~\ref{fig3} compares random lower bound coefficients (``Random'') with a common lower bound coefficient (``Uniform'') as $\rho=Kp_{\mathrm{min}}/P$ increases; the random coefficients are generated with mean $\rho$. A larger $\rho$ imposes stronger data power requirements and increases the ROI-ESL, thereby revealing a sensing--communication tradeoff. Values below the numerical display floor are plotted at $-100~\mathrm{dB}$. The comparison between ``Random'' and ``Uniform,'' together with the results for $\xi=4,8,16$ under ``Uniform,'' is consistent with Theorem~\ref{theorem2}. Fig.~\ref{fig4} shows the normalized MS P-ACF sidelobes within the ROI obtained by BBnB as $\rho$ increases for $\xi=32$, and Fig.~\ref{fig5} shows the corresponding pilot placements and power allocations. Fig.~\ref{fig6} shows that BBnB achieves a lower ROI-ESL than ``Uniform Pilots + OPA'' over the considered range of $\rho$.

\begin{figure}[!h]
\centering
\includegraphics[width=3.3in]{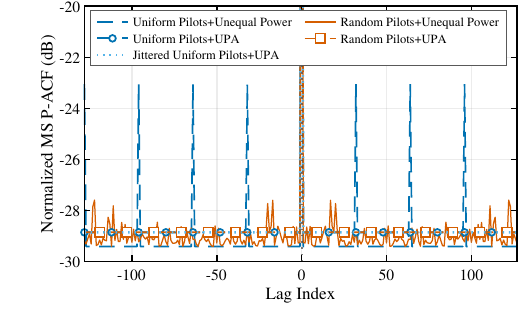}
\caption{Normalized MS P-ACFs for different pilot placements under uniform and nonuniform subcarrier power allocation.}
\label{fig7}
\end{figure}

Fig.~\ref{fig7} confirms the pilot placement invariance under UPA established in Proposition~\ref{prop2}, while different pilot placements yield distinct sidelobe shapes under nonuniform power allocation.


\begin{table}[!h]
 \centering
 \caption{Comparison of sensing performance for different pilot placements}
 \label{tab2}
 \begin{tabular}{|c|c|c|}
 \hline
 & Normalized global EISL & PSLR \\ \hline
 Uniform Pilots & -4.99 dB & -24.90 dB \\ \hline
 Random Pilots & -4.99 dB & -28.22 dB \\ \hline
 Jittered Uniform Pilots & -4.99 dB & -28.31 dB \\ \hline
 \end{tabular}
\end{table}

Table~\ref{tab2} confirm the pilot placement invariance of the normalized global EISL established in Section~III-B, while showing substantial differences in PSLR. Here, ``Jittered Uniform Pilots'' denotes a baseline obtained by perturbing the positions of a strictly uniform pilot placement.

\begin{figure}[h]
\centering
\includegraphics[width=3.3in]{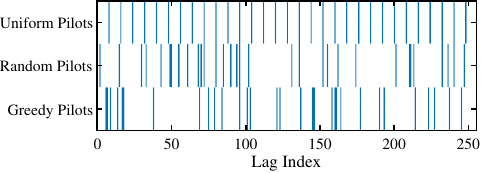}
\caption{Pilot placements for the ``Greedy Pilots,'' ``Random Pilots,'' and ``Uniform Pilots'' schemes under the two-level power structure.}
\label{fig9}
\end{figure}

\begin{figure}[h]
\centering
\includegraphics[width=3.4in]{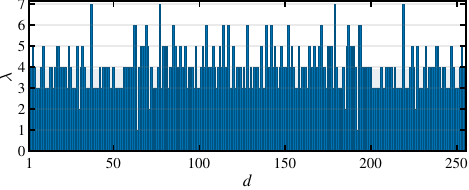}
\caption{Cyclic difference multiplicities of ``Greedy Pilots'' and the equal multiplicity value $\lambda$ associated with a cyclic difference set.}
\label{fig10}
\end{figure}

\begin{figure}[h]
\centering
\includegraphics[width=3.4in]{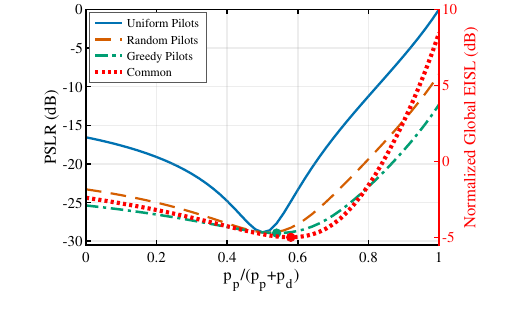}
\caption{PSLR and normalized global EISL versus $p_{\mathrm p}/(p_{\mathrm p}+p_{\mathrm d})$ for different pilot placements under the two-level power structure.}
\label{fig11}
\end{figure}

We use a greedy algorithm to obtain ``Greedy Pilots'' as an approximation to the cyclic difference set structure discussed in Section~III-C. Fig.~\ref{fig9} compares ``Greedy Pilots'' with ``Random Pilots'' (without Monte Carlo averaging) and ``Uniform Pilots.'' Fig.~\ref{fig10} shows the cyclic difference multiplicities of ``Greedy Pilots.'' The equal multiplicity reference is $\lambda=3.89$, and the mean-square error relative to this reference is $1.03$. Fig.~\ref{fig11} shows the PSLR and normalized global EISL as $p_\mathrm{p}/(p_\mathrm{p}+p_\mathrm{d})$ varies. ``Greedy Pilots'' achieves a lower PSLR than the other schemes over the considered range. The curve labeled ``Common'' denotes the normalized global EISL, which is independent of pilot placement by Proposition~\ref{prop3}. For ``Uniform Pilots'' and ``Random Pilots,'' the PSLR is minimized at $p_\mathrm{p}=p_\mathrm{d}$, corresponding to UPA. For ``Greedy Pilots,'' the value of $p_\mathrm{p}/(p_\mathrm{p}+p_\mathrm{d})$ minimizing the PSLR is closer to that minimizing the normalized global EISL.

\begin{figure}[h]
\centering
\includegraphics[width=3.4in]{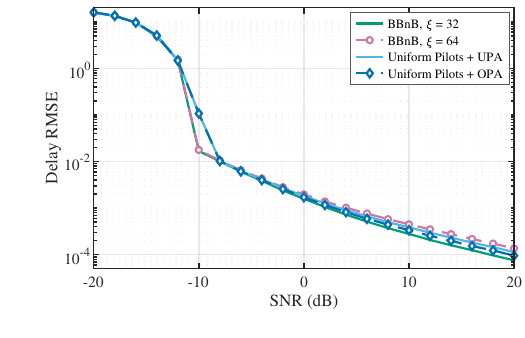}
\caption{Delay RMSE versus sensing SNR for BBnB with $\xi\in\{32,64\}$, ``Uniform Pilots + UPA,'' and ``Uniform Pilots + OPA.''}
\label{fig12}
\end{figure}

Fig.~\ref{fig12} shows the delay root-mean-square error (RMSE) versus sensing SNR. After integer-delay detection, three-point parabolic interpolation is used for fractional-delay estimation. For a fair comparison, the same search window, $[-32,-1]\cup[1,32]$, is used for all schemes. The BBnB scheme with $\xi=32$ achieves the lowest delay RMSE. Compared with ``Uniform Pilots + OPA'', BBnB reduces the delay RMSE by up to 84.3$\%$ at an SNR of $-10$ dB, demonstrating that lower sidelobe levels can translate into higher ranging accuracy.

\begin{figure}[h]
\centering
\includegraphics[width=3.4in]{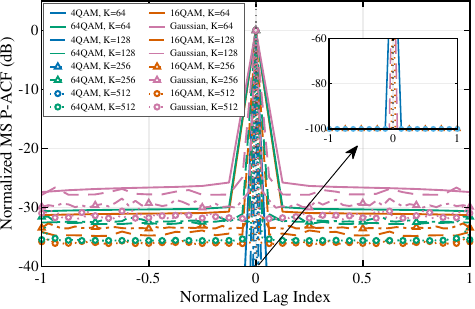}
\caption{Normalized MS P-ACFs obtained by BBnB for different QAM orders and numbers of subcarriers without data subcarrier power constraints.}
\label{fig13}
\end{figure}

Next, we evaluate the normalized MS P-ACFs for different QAM orders and numbers of subcarriers, without imposing data subcarrier power constraints. As shown in Fig.~\ref{fig13}, the horizontal axis is the lag index normalized by $K/8$, and only the interval $|\ell|\leq K/8$ is displayed. As the QAM order increases, the sidelobe level increases, while circularly symmetric complex Gaussian signaling yields the highest level. For 4-QAM, the zero sidelobe result is consistent with Proposition~\ref{prop2} because $\mu_4=1$. Increasing the number of subcarriers reduces the sidelobes, indicating that symbol power fluctuations are averaged over more subcarriers.

\begin{figure}[h]
\centering
\includegraphics[width=3.4in]{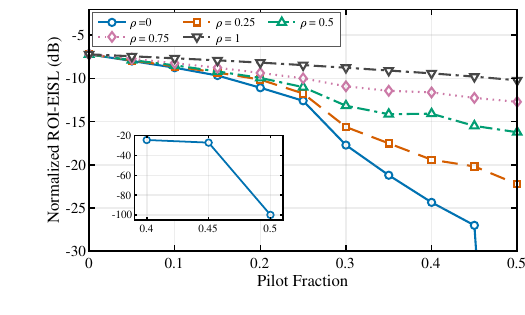}
\caption{Normalized ROI-EISL versus pilot fraction for different data subcarrier power lower bounds with $\xi=64$.}
\label{fig14}
\end{figure}

\begin{figure}[h]
\centering
\includegraphics[width=3.4in]{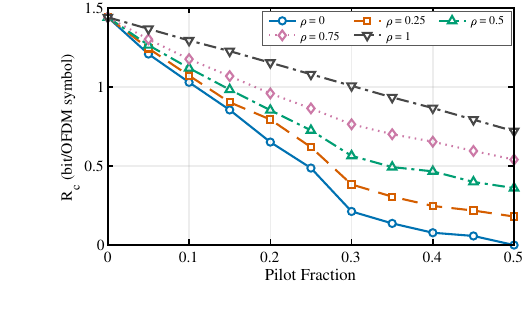}
\caption{Communication rate per OFDM symbol versus pilot fraction for different data subcarrier power lower bounds under a unit-gain flat channel.}
\label{fig15}
\end{figure}

Finally, we evaluate sensing and communication performance for different pilot fractions with $\xi=64$. Fig.~\ref{fig14} shows that the normalized ROI-EISL generally decreases as the pilot fraction increases. Under a unit-gain flat channel, $|H_{\mathrm c}[k]|^2=1$, and unit noise power, $\sigma_{\mathrm c}^2=1$, Fig.~\ref{fig15} shows that the communication rate decreases as more subcarriers are assigned to pilots. Thus, the pilot fraction controls the tradeoff between normalized ROI-EISL and communication rate.

\section{Conclusion}
This paper analyzed and optimized the ranging performance of P-D OFDM signals using the normalized MS P-ACF. We derived a closed-form MS P-ACF, defined its normalized form, and characterized how pilot placement and subcarrier power allocation shape the sidelobes. We then analyzed normalized global EISL minimization and PSLR minimization under a two-level power structure, and formulated the joint pilot placement and power allocation problem for normalized ROI-EISL minimization. Under a uniform data subcarrier power lower bound and $\xi<|\mathcal{S}_{\mathrm{p}}|$, a globally optimal strictly uniform pilot placement was obtained; a BBnB algorithm was developed for the general case. Numerical results validated the analysis and the proposed optimization. Future work will incorporate channel state information and other practical system parameters, and extend the framework to velocity estimation.

\appendices
\section{Proof of Proposition~\ref{prop1}}\label{appea}
By rearranging \eqref{p-acf}, we obtain
\begin{equation}
r[\ell]=\sum_{k=0}^{K-1}w[k]\mathrm{e}^{\mathrm{j}2\uppi k\ell/K}
+\sum_{k\in\mathcal{S}_{\mathrm{d}}}p_{\mathrm{d},k}(|s_k|^2-1)\mathrm{e}^{\mathrm{j}2\uppi k\ell/K}.
\end{equation}
Define
\begin{equation}
\begin{aligned}
\bar{r}[\ell] &\triangleq \sum_{k=0}^{K-1}w[k]\mathrm{e}^{\mathrm{j}2\uppi k\ell/K}, \\
\widetilde{r}[\ell] &\triangleq \sum_{k\in\mathcal{S}_{\mathrm{d}}}
p_{\mathrm{d},k}(|s_k|^2-1)\mathrm{e}^{\mathrm{j}2\uppi k\ell/K}.
\end{aligned}
\end{equation}
By Assumption~\ref{assume1}, $\mathbb{E}[\widetilde{r}[\ell]]=0$, so the cross term between $\bar r[\ell]$ and $\widetilde r[\ell]$ vanishes. Expanding $\mathbb{E}\!\left[|\widetilde{r}[\ell]|^2\right]$ yields
\begin{equation}
\begin{aligned}
\mathbb{E}\!\left[|\widetilde{r}[\ell]|^2\right]
&= \sum_{k\in\mathcal{S}_{\mathrm{d}}}\sum_{m\in\mathcal{S}_{\mathrm{d}}}
p_{\mathrm{d},k}p_{\mathrm{d},m}\mathrm{e}^{\mathrm{j}2\uppi(k-m)\ell/K} \\
&\times \mathbb{E}\!\left[(|s_k|^2-1)(|s_m|^2-1)\right] \\
&= (\mu_4-1)\sum_{k\in\mathcal{S}_{\mathrm{d}}}p_{\mathrm{d},k}^2.
\end{aligned}
\end{equation}
The MS P-ACF is then given by
\begin{equation}
\begin{aligned}
\mathbb{E}\!\left[|r[\ell]|^2\right]
&= \mathbb{E}\!\left[\left|\bar{r}[\ell]+\widetilde{r}[\ell]\right|^2\right] \\
&= |\bar{r}[\ell]|^2+\mathbb{E}\!\left[|\widetilde{r}[\ell]|^2\right] \\
&= \left|\sum_{k=0}^{K-1}w[k]\mathrm{e}^{\mathrm{j}2\uppi k\ell/K}\right|^2
+(\mu_4-1)\sum_{k\in\mathcal{S}_{\mathrm{d}}}p_{\mathrm{d},k}^2.
\end{aligned}
\end{equation}

\section{Proof of Corollary~\ref{coro1}}\label{appeb}
Using the Parseval relation for the DFT and
$|\Omega_{\mathrm g}|=K-1$, we obtain
\begin{equation}
\begin{aligned}
\mathrm{EISL}(\Omega_{\mathrm g})
&=
K\sum_{k=0}^{K-1}w^2[k]-P^2
+(K-1)(\mu_4-1)
\sum_{k\in\mathcal S_{\mathrm d}}p_{\mathrm d,k}^2\\
&=
K\sum_{k\in\mathcal S_{\mathrm p}}p_{\mathrm p,k}^2
+\left[(K-1)\mu_4+1\right]
\sum_{k\in\mathcal S_{\mathrm d}}p_{\mathrm d,k}^2
-P^2.
\end{aligned}
\end{equation}

\section{Proof of Theorem~\ref{theoremgeisl}}\label{appec}

For \eqref{eislopti}, define the total pilot and data powers as
\begin{equation}
P_{\mathrm{p}}=\sum_{k\in\mathcal{S}_{\mathrm{p}}}p_{\mathrm{p},k}, 
P_{\mathrm{d}}=\sum_{k\in\mathcal{S}_{\mathrm{d}}}p_{\mathrm{d},k}.
\end{equation}
For notational convenience, define
\begin{equation}
Q_{\mathrm{p}}=\sum_{k\in\mathcal{S}_{\mathrm{p}}}p_{\mathrm{p},k}^2, 
Q_{\mathrm{d}}=\sum_{k\in\mathcal{S}_{\mathrm{d}}}p_{\mathrm{d},k}^2.
\end{equation}
Then the objective function of \eqref{eislopti} can be written as
\begin{equation}
F(Q_{\mathrm{p}},Q_{\mathrm{d}})
=
\frac{KQ_{\mathrm{p}}+\beta Q_{\mathrm{d}}-P^2}{P^2+(\mu_4-1)Q_{\mathrm{d}}}.
\end{equation}
Its partial derivatives are
\begin{equation}
\begin{aligned}
\frac{\partial F}{\partial Q_{\mathrm p}}
&=
\frac{K}{P^2+(\mu_4-1)Q_{\mathrm d}}>0,\\
\frac{\partial F}{\partial Q_{\mathrm d}}
&=
\frac{K\mu_4P^2-K(\mu_4-1)Q_{\mathrm p}}
{\left[P^2+(\mu_4-1)Q_{\mathrm d}\right]^2}>0.
\end{aligned}
\end{equation}
Hence, $F$ is strictly increasing in both $Q_{\mathrm p}$ and $Q_{\mathrm d}$. By the Cauchy--Schwarz inequality,
\begin{equation}
Q_{\mathrm p}
\geq
\frac{P_{\mathrm p}^2}{|\mathcal S_{\mathrm p}|},
\qquad
Q_{\mathrm d}
\geq
\frac{P_{\mathrm d}^2}{|\mathcal S_{\mathrm d}|},
\end{equation}
where equality holds if and only if the powers are uniform within the
corresponding pilot and data sets. Thus, every optimum has the two-level
structure
\begin{equation}
p_{\mathrm p,k}=p_{\mathrm p},\quad k\in\mathcal S_{\mathrm p},
\qquad
p_{\mathrm d,k}=p_{\mathrm d},\quad k\in\mathcal S_{\mathrm d}.
\end{equation}
For any given $P_{\mathrm{p}}$ and $P_{\mathrm{d}}$, the objective function is minimized when $Q_{\mathrm{p}}$ and $Q_{\mathrm{d}}$ attain their lower bounds, namely, when power is uniformly allocated within the P-D subcarrier sets. Hence, every optimal solution to \eqref{eislopti} has a two-level structure
\begin{equation}
\begin{aligned}
p_{\mathrm{p},k}&=p_{\mathrm{p}},\; k\in\mathcal{S}_{\mathrm{p}},\\
p_{\mathrm{d},k}&=p_{\mathrm{d}},\; k\in\mathcal{S}_{\mathrm{d}}.
\end{aligned}
\end{equation}
Let
\begin{equation}
P_{\mathrm{d}}=tP, P_{\mathrm{p}}=(1-t)P, 0\leq t\leq1.
\end{equation}
Then
\begin{equation}
p_{\mathrm{d}}=\frac{tP}{|\mathcal{S}_{\mathrm{d}}|}, 
p_{\mathrm{p}}=\frac{(1-t)P}{|\mathcal{S}_{\mathrm{p}}|}.
\end{equation}
Substituting into the objective function reduces \eqref{eislopti} to
\begin{equation}
\min_{0\leq t\leq1}f(t)
=
\frac{\dfrac{K}{|\mathcal{S}_{\mathrm{p}}|}(1-t)^2+\dfrac{\beta}{|\mathcal{S}_{\mathrm{d}}|}t^2-1}
{1+\dfrac{\mu_4-1}{|\mathcal{S}_{\mathrm{d}}|}t^2}.
\end{equation}
Examining $f'(t)$ and $f''(t)$ shows that $f(t)$ first decreases and then increases over $[0,1]$. Setting $f'(t^\star)=0$ yields the unique global minimizer \eqref{tstar}. Substituting $t^\star$ yields the optimal P-D subcarrier powers $p_{\mathrm{p}}^\star$ and $p_{\mathrm{d}}^\star$.

Next, consider UPA over all subcarriers. In this case, $t=\frac{|\mathcal{S}_{\mathrm{d}}|}{K}$, and $f'\!\left(\frac{|\mathcal{S}_{\mathrm{d}}|}{K}\right)\geq0$, with equality if and only if $\mu_4=1$. This implies that $t^\star\leq\frac{|\mathcal{S}_{\mathrm{d}}|}{K}$. Expanding $p_{\mathrm{p}}^\star$ and $p_{\mathrm{d}}^\star$ gives
\begin{equation}
p_{\mathrm{d}}^\star=\frac{Pt^\star}{|\mathcal{S}_{\mathrm{d}}|}\leq\frac{P}{K}\leq\frac{(1-t^\star)P}{|\mathcal{S}_{\mathrm{p}}|}=p_{\mathrm{p}}^\star.
\end{equation}

\section{Proof of Proposition~\ref{prop4}}\label{apped}

Expanding the deterministic term of the ROI-EISL yields
\begin{align}
&\sum_{\ell\in\Omega_\xi}
\left|\sum_{k=0}^{K-1}w[k]\mathrm{e}^{\mathrm{j}2\uppi k\ell/K}\right|^2 \notag\\
&=\sum_{\ell\in\Omega_\xi}
\left(\sum_{m=0}^{K-1}w[m]\mathrm{e}^{\mathrm{j}2\uppi m\ell/K}\right)
\left(\sum_{n=0}^{K-1}w[n]\mathrm{e}^{-\mathrm{j}2\uppi n\ell/K}\right) \notag\\
&=\sum_{m=0}^{K-1}\sum_{n=0}^{K-1}
w[m]w[n]\sum_{\ell\in\Omega_\xi}\mathrm{e}^{\mathrm{j}2\uppi(m-n)\ell/K}.
\end{align}
By substituting $[\boldsymbol{C}_{\Omega_\xi}]_{m,n}$, we obtain
\begin{equation}
\sum_{\ell\in\Omega_\xi}
\left|\sum_{k=0}^{K-1}w[k]\mathrm{e}^{\mathrm{j}2\uppi k\ell/K}\right|^2
=\boldsymbol{w}^{\mathrm{T}}\boldsymbol{C}_{\Omega_\xi}\boldsymbol{w}.
\end{equation}
Expanding the random term gives
\begin{equation}
|\Omega_\xi|(\mu_4-1)\sum_{k\in\mathcal{S}_{\mathrm{d}}}p_{\mathrm{d},k}^2
=|\Omega_\xi|(\mu_4-1)\boldsymbol{w}^{\mathrm{T}}
\boldsymbol{D}_{\mathrm{d}}(\mathcal{S}_{\mathrm{p}})\boldsymbol{w}.
\end{equation}
Substituting these results into the definition of the normalized ROI-EISL gives
\begin{equation}
\mathrm{EISL}^{\mathrm{N}}(\Omega_\xi)
=
\frac{
\boldsymbol{w}^{\mathrm{T}}\!\left[
\boldsymbol{C}_{\Omega_\xi}
+|\Omega_\xi|(\mu_4-1)\boldsymbol{D}_{\mathrm{d}}(\mathcal{S}_{\mathrm{p}})
\right]\boldsymbol{w}
}{
P^2+(\mu_4-1)\boldsymbol{w}^{\mathrm{T}}
\boldsymbol{D}_{\mathrm{d}}(\mathcal{S}_{\mathrm{p}})\boldsymbol{w}
}.
\end{equation}

\section{Proof of Proposition~\ref{prop5}}\label{appee}

Since $\boldsymbol{C}_{\Omega_\xi}\succeq\mathbf{0}$ and
$p_{\mathrm d,k}\geq p_{\mathrm{min}}$,
\begin{equation}
\begin{aligned}
\mathrm{EISL}^{\mathrm N}(\Omega_\xi)
&\geq
\frac{
|\Omega_\xi|(\mu_4-1)q
}{
P^2+(\mu_4-1)q
}\\
&\geq
\frac{
|\Omega_\xi|(\mu_4-1)
|\mathcal S_{\mathrm d}|p_{\mathrm{min}}^2
}{
P^2+(\mu_4-1)
|\mathcal S_{\mathrm d}|p_{\mathrm{min}}^2
},
\end{aligned}
\end{equation}
where
$q=\boldsymbol{w}^{\mathrm T}
\boldsymbol{D}_{\mathrm d}(\mathcal S_{\mathrm p})
\boldsymbol{w}
=\sum_{k\in\mathcal S_{\mathrm d}}p_{\mathrm d,k}^2
\geq|\mathcal S_{\mathrm d}|p_{\mathrm{min}}^2$.
The second inequality follows because
$q/[P^2+(\mu_4-1)q]$ is nondecreasing for $q\geq0$.

\section{Proof of Theorem~\ref{theorem2}}\label{appef}

For the strictly uniform pilot placement and $\ell\in\Omega_\xi$,
\begin{align}\label{pilotf}
\sum_{k\in\mathcal{S}_{\mathrm{p}}^{\star}}
\mathrm{e}^{\mathrm{j}2\uppi k\ell/K}
&=
\mathrm{e}^{\mathrm{j}2\uppi k_0\ell/K}
\sum_{m=0}^{|\mathcal{S}_{\mathrm{p}}|-1}
\mathrm{e}^{\mathrm{j}2\uppi m\ell/|\mathcal{S}_{\mathrm{p}}|} \notag\\
&=0,
\end{align}
where the second equality follows from the finite geometric series. We further consider
\begin{align}
&\sum_{k=0}^{K-1}w^{\star}[k]\mathrm{e}^{\mathrm{j}2\uppi k\ell/K}\notag \\ 
&=
p_{\mathrm{min}}\sum_{k=0}^{K-1}\mathrm{e}^{\mathrm{j}2\uppi k\ell/K}
+\left(p_{\mathrm{p}}^{\star}-p_{\mathrm{min}}\right)
\sum_{k\in\mathcal{S}_{\mathrm{p}}^{\star}}
\mathrm{e}^{\mathrm{j}2\uppi k\ell/K} \notag\\
&=0.
\end{align}
The second equality follows from the orthogonality of DFT basis functions and \eqref{pilotf}. Therefore, $(\boldsymbol{w}^{\star})^{\mathrm{T}}\boldsymbol{C}_{\Omega_\xi}\boldsymbol{w}^{\star}=0$. Moreover, all data subcarriers are assigned their minimum power, yielding $(\boldsymbol{w}^{\star})^{\mathrm{T}}
\boldsymbol{D}_{\mathrm{d}}(\mathcal{S}_{\mathrm{p}}^{\star})
\boldsymbol{w}^{\star}
=
|\mathcal{S}_{\mathrm{d}}|p_{\mathrm{min}}^2$. Substituting these two results into the normalized ROI-EISL yields
\begin{align}
\mathrm{EISL}^{\mathrm{N}}(\Omega_\xi)
&=
\frac{
(\boldsymbol{w}^{\star})^{\mathrm{T}}\boldsymbol{C}_{\Omega_\xi}\boldsymbol{w}^{\star}
+|\Omega_\xi|(\mu_4-1)(\boldsymbol{w}^{\star})^{\mathrm{T}}
\boldsymbol{D}_{\mathrm{d}}(\mathcal{S}_{\mathrm{p}}^{\star})
\boldsymbol{w}^{\star}
}{
P^2+(\mu_4-1)(\boldsymbol{w}^{\star})^{\mathrm{T}}
\boldsymbol{D}_{\mathrm{d}}(\mathcal{S}_{\mathrm{p}}^{\star})
\boldsymbol{w}^{\star}
} \notag\\
&=
\frac{
|\Omega_\xi|(\mu_4-1)|\mathcal{S}_{\mathrm{d}}|p_{\mathrm{min}}^2
}{
P^2+(\mu_4-1)|\mathcal{S}_{\mathrm{d}}|p_{\mathrm{min}}^2
}.
\end{align}

\section{Proof of Proposition~\ref{prop6}}\label{appeg}

For fixed $\mathcal S_{\mathrm p}$, let
$F(\boldsymbol{w})=\mathrm{EISL}^{\mathrm N}(\Omega_\xi)$.
For any $\eta\in[0,|\Omega_\xi|]$,
$F(\boldsymbol{w})\leq\eta$ is equivalent to
\begin{equation}
\boldsymbol{w}^{\mathrm T}
\left[
\boldsymbol{C}_{\Omega_\xi}
+(\mu_4-1)(|\Omega_\xi|-\eta)
\boldsymbol{D}_{\mathrm d}(\mathcal S_{\mathrm p})
\right]
\boldsymbol{w}
\leq
\eta P^2,
\end{equation}
i.e.,
$\boldsymbol{w}^{\mathrm T}\boldsymbol{A}_{\eta}\boldsymbol{w}
\leq\eta P^2$.
For fixed $\mathcal S_{\mathrm p}$,
$\boldsymbol{A}_{\eta}\succeq\mathbf 0$ over this interval, so the
corresponding feasibility problem is a convex QP. Furthermore, feasibility
is monotonic in $\eta$: if $\eta$ is feasible, every $\eta'\geq\eta$ is
also feasible. Hence,
\begin{equation}
F^\star
=
\min
\left\{
\eta\in[0,|\Omega_\xi|]:
q(\eta)\leq\eta P^2
\right\},
\end{equation}
which can be obtained by bisection.

\bibliographystyle{IEEEtran}
\bibliography{books1}

\end{document}